\def\OPT{\textsf{OPT}}
\def \cost{\textsf{cost}}
\def \effect{\textsf{effect}}
\newtheorem{lemma}{Lemma}
\newtheorem{theorem}{Theorem}
\newtheorem{corollary}{Corollary}
\newtheorem{remark}{Remark}
\newtheorem{example}{Example}
\author{Robert Chiang}
\affil{Department of Combinatorics and Optimization, University of Waterloo,\\ 200 University Avenue West,
Waterloo, ON, Canada N2L 3G1\\
  \texttt{rjchiang@uwaterloo.ca}}
  \author{Kanstantsin Pashkovich}
 \affil{School of Computer Science and Electrical Engineering, University of Ottawa,\\ 800 King Edward Avenue,
Ottawa, ON, Canada
K1N 6N5\\
    \texttt{kpashkov@uottawa.ca}}
\title{On the approximability of the stable matching problem with ties of size two}
\date{}
\begin{document}
\maketitle

\begin{abstract}
The stable matching problem is one of the central problems of algorithmic game theory. If participants are allowed to have ties, the problem of finding a stable matching of maximum cardinality is an $\mathcal{NP}$-hard problem, even when the ties are of size two. Moreover, in this setting it is UGC-hard to provide an approximation with a constant factor smaller than $4/3$. In this paper,  we give a tight analysis of an approximation algorithm given by Huang and Kavitha for the maximum cardinality stable matching problem with ties of size two, demonstrating an improved $4/3$-approximation factor. 
\end{abstract}

\section{Introduction}

The stable matching problem is of crucial importance for the game theory.  In an instance of a maximum cardinality stable matching problem we are given a bipartite graph $G=(A\cup B, E)$ with bipartition $A$ and $B$. Following the standard terminology, we refer to $A$ as men and $B$ as women. For $a\in A$, we define $N(a)$ to be the subset of nodes in $B$ adjacent to $a$ in $G$; analogously we define $N(b)$ for $b\in B$.

Each person $c\in A\cup B$ has strict preferences over $N(c)$. A matching $M$ is called a \emph{stable matching} if there are no $a\in A$ and $b\in B$ such that $a$ is either unmatched or prefers $b$ to the woman he is matched to by $M$ and $b$ is either unmatched or prefers $a$ to the man she is matched to by $M$. Clearly, if a matching $M$ is not stable, then it contains a pair $(a,b)$, $a\in A$, $b\in B$ satisfying the above conditions; such a pair $(a,b)$ is called a \emph{blocking pair} for $M$.

In their seminal work, Gale and Shapley developed a polynomial running time algorithm to find a stable matching~\cite{GaleShapley62}. Moreover, since all stable matchings have the same cardinality~\cite{GaleSotomayor85}, the algorithm of Gale and Shapley finds a maximum cardinality stable matching in polynomial running time. The situation changes when the people are allowed to have \emph{ties}. In the case of ties, stable matchings for the same instance of a problem can have different cardinalities. Moreover, it is $\mathcal{NP}$-hard to find a maximum cardinality stable matching even when there are ties of size two only~\cite{Manlove}. In this case, it is $\mathcal{NP}$-hard  to approximate the maximum cardinality of a stable matching with a constant factor smaller than $21/19$ and UGC (Unique Game Conjecture)-hard to approximate with a constant factor smaller than $4/3$~\cite{Yanagisawa}. We would like to say that the maximum cardinality stable matching problem with ties appears in diverse situations and thus approximation algorithms for different variants of this problem were extensively studied 
\cite{Bauckholt, HuangKavitha2015, Iwamaetal07, Iwamaetal14,  Kiraly11,  algorithms, LP18,  mcdermid_first_approx, paluch_original}.

To the best of our knowledge, the algorithm with a currently best approximation factor for the maximum cardinality stable matching problem with ties of size two is due to Huang and Kavitha~\cite{HuangKavitha2015}. In their paper, Huang and Kavitha provided an approximation algorithm for this problem and showed that the approximation factor of their algorithm is at most $10/7$.

\subsection*{Our Contribution} In this paper,  we give a tight analysis of the approximation algorithm given by Huang and Kavitha~\cite{HuangKavitha2015} for the maximum cardinality stable matching problem with ties of size two, demonstrating an improved $4/3$-approximation bound. Notably, any polynomial running time algorithm with a smaller approximation factor than $4/3$ would automatically lead to refutation of the Unique Game Conjecture~\cite{Yanagisawa}.

To obtain our result we use a new charging scheme. In contrast to the charging scheme in~\cite{HuangKavitha2015}, our charging scheme is ``local". In particular, the charging scheme in~\cite{HuangKavitha2015} creates charges from paths and distributes the charges along paths (using so called ``good paths"), distributing the charges ``globally". The charging scheme in~\cite{HuangKavitha2015} for the algorithm for the problem with \emph{ties of size two} closely follows the original charging scheme in~\cite{HuangKavitha2015} for the algorithm for the problem with  \emph{one-sided ties}. Recently, the charging scheme for the problem with one-sided ties was substantially modified by Bauckholt,  Pashkovich and  Sanit\`{a}  in~\cite{Bauckholt}, where the modified charging scheme still has a ``global" nature and distributes the charges along paths but in a more nuanced way (using so called ``path jumps",  ``matching jumps" and ``matching jumps with exception") than the original charging scheme in~\cite{HuangKavitha2015}. However, it is not clear whether the original charging scheme in~\cite{HuangKavitha2015} or new ideas coming from the modified charging scheme in~\cite{Bauckholt} could lead to a better analysis of the algorithm for the problem with ties of size two. Our charging scheme is different from the two charging schemes above, in particular it is local and therefore it is much simpler to work with. To implement our charging scheme, we need to know only the local structure of the output matching, local structure of an optimal matching, and local structure of accepted proposals at the end of the algorithm. \emph{To sum up, in our charging scheme nodes get charges only from the proposals involving them, while in the charging schemes in~\cite{Bauckholt} and \cite{HuangKavitha2015} the charges come from paths and are distributed to nodes along paths.} A detailed comparison of our approach with approaches from~\cite{Bauckholt} and~\cite{HuangKavitha2015} can be found in Appendix.

\section{Algorithm by Huang and Kavitha}
\label{section:algorithm}

First, let us describe the algorithm by Huang and Kavitha~\cite{HuangKavitha2015}. The description of the algorithm closely follows the description of the algorithm by Huang and Kavitha~\cite{HuangKavitha2015} for the maximum cardinality stable matching problem with one-sided ties from~\cite{Bauckholt}.

A stable matching is computed in two phases. In the first phase, which is the proposal
phase, men (in arbitrary order) make proposals to women, 
while women accept, bounce, forward, or reject proposals. In the second phase,  we consider a graph based on the proposals at the end of the first phase, which in its turn leads to an output stable matching.

Before we describe the algorithm by Huang and Kavitha~\cite{HuangKavitha2015}, let us introduce the following notions related to preferences. Let $a\in A$ and $a'\in A$ be on the preference list of $b\in B$; i.e., $a$ and $a'$ are in $N(b)$. Then we can compare $a$ and $a'$ from the perspective of $b$. If \(a\) and \(a'\) are tied on the list of \(b\), we say \(b\) is \emph{indifferent} between them, denoted by \(a \simeq_b a'\). Further, if $b$ ranks $a$ strictly higher than $a'$ on her preference list, then we say that $b$ \emph{(strongly) prefers} $a$ to $a'$, denoted by $a >_b a'$; otherwise, we say that $b$ \emph{weakly prefers} $a'$ to $a$, denoted by $a' \geq_b a$. In other words, if $b$ weakly prefers $a'$ to $a$ then $b$ is either indifferent between $a$ and $a'$ or strongly prefers $a'$ to $a$. Analogously, we define \emph{indifference} $b'\simeq_a b$, \emph{weak preference} $b'\geq_a b$, and \emph{(strong) preference} $b'>_a b$ for men over women.

\subsubsection*{Proposals} Each man $a\in A$ has two proposals $p^1_a$ and $p^2_a$. Initially, both $p^1_a$ and $p^2_a$ are offered to the first woman on $a$'s list. At each moment of the algorithm every man has one of the following three \emph{statuses}: \emph{basic}, \emph{$1$-promoted}, and \emph{$2$-promoted}.  Each man keeps a rejection history to record the women who have rejected him in his current status.

If a proposal $p^i_a$, $i=1,2$ is rejected by a woman $b$ on $a$'s list, then the proposal $p^i_a$ goes to a most preferred  woman on $a$'s list who has not rejected $a$ in his current status. In the case when there is no woman on $a$'s list who has not rejected $a$ in his current status,  the man $a$ changes his status as described below or stops making proposals. If the man $a$ changes his status, his rejection history is emptied and  $a$ starts making proposals again by proposing to a most preferred woman on his list. Note that in the case where there are two most preferred women on $a$'s list who have not rejected him in his current status, the man $a$ breaks the tie arbitrarily.

Each man $a\in A$ starts as a basic man. If every woman in $N(a)$ rejects a proposal of $a$ at least once, the man $a$ becomes $1$-promoted. If afterwards every woman in $N(a)$ again rejects a proposal of $a$ as a $1$-promoted man at least once, the man $a$ becomes $2$-promoted. Finally, if every woman in $N(a)$ rejects a proposal of $a$ as a $2$-promoted man at least once, the man $a$ stops making proposals.

\subsubsection*{Proposals' Acceptance}
 A woman $b$, who gets a proposal $p^i_a$ from a man $a$, always accepts it if at the moment $b$ holds at most one proposal excluding $p^i_a$. Otherwise, $b$ tries to make a \emph{bounce} step, and if the bounce step is not successful, $b$ tries to make a \emph{forward} step.

\begin{itemize}
\item \emph{Bounce step:} So the woman $b$ at the moment holds two proposals, $p^{i'}_{a'}$ and $p^{i''}_{a''}$, and receives a third proposal $p^{i}_{a}$. If for some $\alpha\in \{a,a',a''\}$ there exists a woman $\beta$ such that $b\simeq_{\alpha}\beta$ and at the moment $\beta$ holds at most one proposal, then a proposal from $\alpha$ to $b$ is bounced to $\beta$ and the bounce step is called \emph{successful}. 
 
\item \emph{Forward step:} So the woman $b$ at the moment holds two proposals, $p^{i'}_{a'}$ $p^{i''}_{a''}$, and receives a proposal $p^{i}_{a}$. If two of the proposals in $\{p^i_a,\,p^{i'}_{a'},\, p^{i''}_{a''}\}$ are from the same man $\alpha$ and there exists a woman $\beta$ distinct from $b$ such that  $b\simeq_\alpha \beta$ and $\beta$ has not rejected $\alpha$ in his current status, then a proposal $p^1_\alpha$ from $\alpha$ to $b$ is forwarded to $\beta$ and  the forward step is called \emph{successful}. 
\end{itemize}
 
If $b$ holds proposals $p^{i'}_{a'}$ and $p^{i''}_{a''}$ and receives a different proposal $p^i_a$, but both the bounce and forward steps are not successful, then $b$ rejects any of the \emph{least desirable proposals} as defined below, breaking ties arbitrarily.{ \it Note that a proposal of a man which is bounced or forwarded is not considered as rejected, so there is no update of the rejection history for any of the men during the bounce or forward steps.}

For a woman $b$, proposal $p^i_a$ is superior to $p^{i'}_{a'}$ if one of the following is true
\begin{itemize}
\item $b$ prefers $a$ to $a'$.
\item $b$ is indifferent between $a$ and $a'$; $a$ is currently $2$-promoted while $a'$ is not $2$-promoted.
\item $b$ is indifferent between $a$ and $a'$; $a$ is currently $1$-promoted while $a'$ is basic.
\end{itemize}
A proposal $p^i_a$ is a \emph{least desirable proposal} among a set of proposals that a woman $b$ has if it is not superior to any of the proposals from which the woman $b$ selects two proposals to keep.

There is also a special case for the rejection step. Assume that $b$ holds two proposals $p^{i'}_{a'}$, $p^{i''}_{a''}$ and receives a different proposal $p^i_a$, where $a\simeq_b a'\simeq_b a''$ and the promotion statuses of $a$, $a'$, and $a''$ are the same. Clearly, since the ties are of size $2$, we have that two of the three proposals  $p^i_a$, $p^{i'}_{a'}$, and $p^{i''}_{a''}$ are from the same man. In this special case, $b$ rejects one of the proposals from this man. 
 
\subsubsection*{The output matching}

Let $G'$ be the bipartite graph with the node set $A\cup B$ and the edge set $E'$, where $E'$ consists of the edges $(a,b)$, $a\in A$, and $b\in B$ such that at the end of the algorithm $b$ holds a proposal from $a$. Note for the sake of exposition that we allow $G'$ to contain two parallel edges of the form $(a,b)$, $a\in A$ and $b\in B$, when at the end of the algorithm $b$ holds two proposals from $a$. Clearly, the degree of a node in $G'$ is at most two, since each man has at most two proposals and each woman is holding at most two proposals at any point in time. Let $M$ be a maximum cardinality matching in $G'$ where all degree two nodes of $G'$ are matched. In~\cite{HuangKavitha2015}, it was shown that the matching $M$ is a stable matching in the graph $G$.

\begin{theorem}[\cite{HuangKavitha2015}]
The total number of proposals made during the algorithm is~$O(|E|)$  and the output matching is a stable matching in $G=(A\cup B, E)$.  
\end{theorem}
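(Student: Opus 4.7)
The plan is to prove the two assertions of the theorem separately: first the $O(|E|)$ bound on proposals, then the stability of the output matching $M$ in $G$. For the proposal count, I would bound each type of proposal movement. Each man initially offers two proposals, and thereafter every proposal moves only via a rejection, a bounce step, or a forward step. A man $a$ has at most three promotion statuses (basic, $1$-promoted, $2$-promoted), and within each status the rejection-history mechanism guarantees that no woman in $N(a)$ rejects $a$ more than once; summed over statuses, the rejection-induced moves total at most $3\sum_{a\in A}|N(a)| = 3|E|$. Every bounce or forward step can be charged to an edge $(\alpha,\beta)\in E$, where $\alpha$ is the indifferent man and $\beta$ his tied recipient. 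A per-status accounting shows that each such edge absorbs only $O(1)$ charges: forwards explicitly require $\beta$ not to have rejected $\alpha$ in $\alpha$'s current status, and bounces strictly increase $\beta$'s load (capped at two) while leaving the rest of $\beta$'s state in a configuration that must shortly lead to a rejection or a status change of $\alpha$.

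For the stability of $M$, I would first observe that $G'$ has maximum degree two: each man contributes at most two proposal-edges and, by the algorithm's receipt rule, each woman holds at most two proposals at any moment. Thus $G'$ is a disjoint union of paths and cycles, and a maximum matching of $G'$ covers every degree-two vertex, so such an $M$ exists. Suppose for contradiction that $(a,b)\in A\times B$ is a blocking pair for $M$ in $G$. The central invariant to establish, by induction on the algorithm's time-steps, is the following: once $a$ has been rejected by $b$ in some status, $b$ holds from that moment on two proposals that are each superior to any subsequent proposal $a$ could make in that status. Bounce and forward steps reassign proposals between women tied on the proposer's list without reducing any woman's ``quality of holdings'' --- the ``superior proposal'' relation is defined precisely so as to be preserved under such shuffles. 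Combining this invariant with the fact that $a$ stops proposing only after being rejected as a $2$-promoted man by every woman in $N(a)$, every neighbor of $a$ ends the algorithm holding two proposals superior to $a$; in $M$ this pairs every such woman with a partner she weakly prefers to $a$, contradicting the blocking condition from $b$'s side. Symmetrically, if $a$ does not reach his terminal state, $a$'s final holdings are with women he weakly prefers to $b$, contradicting blocking from $a$'s side.

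The hard part of this plan is the stability invariant across bounce and forward steps: these transfer proposals between women without any rejection, so the per-woman inductive hypothesis must be verified to survive each transfer both at the source and at the recipient. The special-case rejection rule, in which three proposals from the same man are all tied at a woman $b$ with identical statuses, also requires care, since one of $b$'s holdings is then forced to leave without being strictly inferior to any alternative; one must check that this forced rejection does not compromise the invariant for any man previously rejected by $b$. Once these cases are settled, assembling the contradiction from a blocking pair $(a,b)$ is routine.
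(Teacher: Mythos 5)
First, a caveat about the comparison itself: the paper does not prove this theorem. It is imported verbatim from \cite{HuangKavitha2015} and used as a black box (the text only says that the stability of $M$ ``was shown'' there), so there is no in-paper argument to measure your attempt against. Judged on its own, your outline follows the natural two-part strategy, but it is a plan rather than a proof: both places where you write that an accounting ``shows'' something or that an invariant ``must be verified'' are exactly where the work lives, and that work is not done. For the proposal count, the rejection-driven moves are indeed bounded by $O(|E|)$ per status, but your justification for the bounce and forward charges is circular (``must shortly lead to a rejection or a status change''). The clean argument you are missing for bounces is monotonicity: a woman's number of held proposals never decreases (a held proposal leaves her only while she is choosing two out of three candidates), and a bounce into $\beta$ requires $\beta$ to hold at most one proposal, so each woman is the target of at most two bounces over the entire run, i.e.\ $O(|B|)$ bounces in total. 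For forwards, the rejection-history guard you cite does not by itself exclude $p^1_\alpha$ ping-ponging between the two women tied on $\alpha$'s list, since neither of them need ever reject $\alpha$; this loop must be explicitly ruled out.

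The more serious gap is in the stability half. Your invariant --- once $b$ rejects $a$ in some status she forever holds two proposals superior to anything $a$ could offer --- is the right statement, but your one-line justification (``the superior-proposal relation is defined precisely so as to be preserved under such shuffles'') is not a proof and is not even obviously true: the bounce step may eject one of $b$'s two \emph{held} proposals to a tied woman $\beta$ and leave $b$ keeping the \emph{incoming} proposal, which can come from a man $b$ ranks strictly below the previously rejected $a$. The invariant does survive, but for a different reason, again via load monotonicity: whenever $b$ keeps a proposal after a failed bounce attempt, that proposal's potential bounce target already holds two proposals, and since loads never decrease that target stays full forever, so proposals held by a popular woman can never later be bounced away (this is essentially the mechanism behind the paper's Remark~\ref{rem:bouncing_step}). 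You explicitly defer this verification, together with the forward step and the special-case rejection, as ``the hard part.'' Until those cases are carried out, the proposal does not establish the theorem; it only reproduces its known proof skeleton.
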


\section{Tight Analysis}
\label{sec:tight_analysis}

Let $\OPT$ be a stable matching with the maximum cardinality, and let \(M\) be a stable matching output by the algorithm. If $b\in B$ is matched with $a\in A$ in $\OPT$, we use the following notation: $\OPT(b):= a$ and $\OPT(a):= b$. Similarly, if $b\in B$ is matched with $a\in A$ in $M$, we use the notation $M(b):= a$ and $M(a):=b$. For the sake of exposition we also define $M(a):=\varnothing$, $M(b):=\varnothing$, $\OPT(a):=\varnothing$, and $\OPT(b):=\varnothing$ when $a$ is not matched by $M$, $b$ is not matched by $M$, $a$ is not matched by $\OPT$, and $b$ is not matched by $\OPT$, respectively. Recall that for every $a\in A$ and $b\in N(a)$ we have $b>_a\varnothing$. Similarly, for every $b\in B$ and $a\in N(b)$ we have $a>_b\varnothing$.

A woman $b\in B$ is  called \emph{successful} if she holds two proposals at the end of the algorithm, i.e., $b$ is successful if the degree of $b$ in $G'$ is two. Similarly, a man $a\in A$ is \emph{successful} if both of his proposals are accepted at the end of the algorithm, i.e., $a$ is successful if the degree of $a$ in $G'$ is two. We call a person \emph{unsuccessful} if that person is not \emph{successful}. Further, if during the algorithm a woman rejected a proposal, we call her \emph{popular}; otherwise, we call her \emph{unpopular}. 

The next remark follows directly from the algorithm of Huang and Kavitha~\cite{HuangKavitha2015} and helps to understand the bouncing step better.
 \begin{remark}\label{rem:bouncing_step}
 Let $a\in A$ and $b, b'\in B$ be such that at the end of the algorithm $b$ holds a proposal from $a$, $b'$ is unsuccessful, and $b'\simeq_a b$. Then $b$ is unpopular.
 \end{remark}
 \begin{proof}
If at some point during the algorithm $b$ rejected a proposal, then at that point $b$ had no proposal for a successful bouncing step. Thus, at any later time point if $b$ received a new proposal that could be successfully bounced, then this new proposal would have been bounced. This contradicts the fact that at the end of the algorithm $b$ has a proposal from $a$, which can be successfully bounced.
\end{proof}

\subsection{Inputs and Outputs}

Inputs and outputs are central objects for our charging scheme, i.e. for defining a cost of a node in $G$.

Given a woman $b\in B$, we say that an edge $e$ incident to $b$ in $G'$ is an \emph{input to $b$} if  $e$ is not in $\OPT$ and not in $M$. Given a man $a\in A$, an \emph{output from $a$} is an edge $e$ incident to $a$ in $G'$ such that $e$ is not in $\OPT$ and not in $M$. In other words, an edge $(a,b)$ in $G'$ is an input to $b\in B$ and an output from $a\in A$ if $(a,b)$ is not in $\OPT+M$, otherwise the edge $(a,b)$ is neither input nor output.

Note that if in $G'$ there are two parallel edges $e_1$ and $e_2$ of the form $(a,b)$, then $M$ must contain the edge $(a,b)$. In this case, one of the parallel edges $\{e_1, e_2\}$ is associated with $M$, while the other is counted as an input and an output unless $\OPT$ also contains $(a,b)$, in which case it is associated with $\OPT$. 

An input $(a,b)$ to $b\in B$ is called a \emph{bad input} if one of the following is true:
\begin{itemize}
	\item $b$ is popular and \(a >_b \OPT(b)\).
	\item $b$ is popular and \(a \simeq_b \OPT(b)\), but $\OPT(b)$ is unsuccessful.
\end{itemize}
An input $(a,b)$ to $b\in B$ is a \emph{good input} to \(b\) if it is not a bad input. In other words, an input $(a,b)$ to $b\in B$ is a \emph{good input} if one of the following is true:
\begin{itemize}
	\item \(b\) is unpopular.
	\item \(b\) is popular and \(\OPT(b) >_b a\).
	\item \(b\) is popular and \(\OPT(b) \simeq_b a\), and \(\OPT(b)\) is successful.
\end{itemize}
An output $(a,b)$ from a man $a$ is called a \emph{bad output}  if one of the following is true:
\begin{itemize}
	\item \(b >_a \OPT(a)\) and \(a\) is not $2$-promoted.
	\item \(b \simeq_a \OPT(a)\), but \(\OPT(a)\) is unsuccessful. 
\end{itemize}
A \emph{good output} from a man \(a\) is an output \((a,b)\) that is not a bad output. In other words, an output $(a,b)$ from $a\in A$ is a \emph{good output} if one of the following is true:
\begin{itemize}
	\item \(b >_a \OPT(a)\) and \(a\) is $2$-promoted.
	\item \(b \simeq_a \OPT(a)\) and \(\OPT(a)\) is successful.
	\item \(b <_a \OPT(a)\).
\end{itemize}

\begin{lemma}\label{lem:double_bad}
There is no edge which is both a bad input and a bad output.
\end{lemma}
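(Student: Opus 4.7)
The plan is to suppose for contradiction that the edge $(a,b)$ is simultaneously a bad input and a bad output. Then $b$ holds $a$'s proposal at the end, $b$ is popular, and $(a,b)\notin M\cup\OPT$. Each of the two definitions has two alternatives, yielding four subcases on which I would split.

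If both definitions are triggered by their strict-preference alternatives, i.e.\ $a>_b\OPT(b)$ together with $b>_a\OPT(a)$ (using the convention $b>_a\varnothing$ when $\OPT(a)=\varnothing$), then $(a,b)$ is a blocking pair for $\OPT$, contradicting stability. Whenever the bad output uses its second alternative, namely $b\simeq_a\OPT(a)$ with $\OPT(a)$ unsuccessful---regardless of which alternative triggers the bad input---\Cref{rem:bouncing_step} applied with $b':=\OPT(a)$ yields $b$ unpopular (we have $b$ holding a proposal from $a$, $b'\simeq_a b$ with $b'\neq b$ since $(a,b)\notin\OPT$, and $b'$ unsuccessful), contradicting the popularity of $b$ demanded by bad input. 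These arguments cover three of the four subcases.

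The remaining subcase is bad input via $a\simeq_b\OPT(b)$ with $\OPT(b)$ unsuccessful, together with bad output via $b>_a\OPT(a)$ and $a$ not $2$-promoted. Here \Cref{rem:bouncing_step} does not apply directly, since the tie is on $b$'s list rather than on $a$'s. I would argue by tracing the algorithm. Unsuccessfulness of $\OPT(b)$ forces $\OPT(b)$ to be $2$-promoted and then fully rejected, so at some moment $t_r$ the woman $b$ rejects a proposal of $\OPT(b)$ who is by then $2$-promoted. Applying the superiority rule at $t_r$, and using that the tie of $\OPT(b)$ on $b$'s list is $\{a,\OPT(b)\}$ while $a$ is not $2$-promoted at any time up to $t_r$ (promotion statuses are monotone in time), the two other proposals $b$ holds at $t_r$ must come from men strictly preferred to $\OPT(b)$, and hence strictly preferred to $a$, on $b$'s list; in particular $a$'s proposal is not among them. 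Consequently, the proposal of $a$ held at $b$ at the end must have arrived at $b$ strictly after $t_r$, at a moment when $b$ already holds two proposals strictly preferred to $a$.

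The main obstacle is finishing this last subcase: ruling out every subsequent execution in which $a$'s proposal still ends up retained at $b$ via some combination of bounce/forward steps or retractions triggered by later promotions of earlier proposers at $b$. I expect to close the argument by the same reasoning that underlies \Cref{rem:bouncing_step}, namely that once $b$ is popular (as witnessed by the rejection at $t_r$), any later arrival at $b$ that admits a successful bounce must in fact be bounced; combining this with the rigidity forced by $\OPT(b)$ being $2$-promoted and stopped should yield the final contradiction.
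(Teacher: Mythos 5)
Your case split and the disposal of three of the four subcases match the paper's own proof: the double-strict case yields a blocking pair for $\OPT$, and both subcases in which the bad output arises from $b\simeq_a\OPT(a)$ with $\OPT(a)$ unsuccessful are killed by Remark~\ref{rem:bouncing_step} applied with $b':=\OPT(a)$, exactly as the paper does (it merges those two subcases into one). Those parts are correct.

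The remaining subcase, however, is not finished, and you say so yourself. What you have established is that $b$ rejected a proposal of $\OPT(b)$ while he was $2$-promoted, that $a\simeq_b\OPT(b)$, and that $a$ is never $2$-promoted (statuses are monotone and $a$ is not $2$-promoted at the end). The paper's closing step is that a proposal from such an $a$ is strictly inferior, in the superiority order governing rejections, to the $2$-promoted proposal of $\OPT(b)$ that $b$ rejected: the two men are tied on $b$'s list, $\OPT(b)$ is $2$-promoted while $a$ never is, and $\OPT(b)$'s standing cannot change further since he is already at the top status. Since $b$ only ever rejects least desirable proposals, she cannot end the algorithm holding a proposal strictly inferior to one she has already rejected --- this is the one-sentence contradiction the paper invokes, and it is precisely the piece your write-up replaces with ``I expect to close the argument by\ldots''. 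The concern you raise about bounce and forward steps later reinstating $a$'s proposal at $b$ is exactly what would have to be argued (or declared immediate from the rejection rule, as the paper does); as submitted, this subcase is a genuine gap. A minor additional slip: the two proposals $b$ keeps at the rejection moment $t_r$ need not come from men strictly preferred to $\OPT(b)$ --- one of them may be $\OPT(b)$'s other proposal --- but they are all at least as high as the rejected one in the superiority order, which is all you need to conclude that $a$'s proposal is not among them.
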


\begin{proof}
Assume that an edge \((a,b)\), $a\in A$, $b\in B$ is both a bad input to \(b\) and a bad output from \(a\). From the definitions of bad inputs and bad outputs, we have that one of the following is true
\begin{enumerate}[label*=\roman*.]
	\item \label{case1_lem:double_bad} \(a >_b \OPT(b)\); \(b >_a \OPT(a)\).
	\item \label{case2_lem:double_bad} \(a \simeq_b \OPT(b)\) and \(\OPT(b)\) is unsuccessful; \(b >_a \OPT(a)\) and \(a\) is not $2$-promoted. 
	\item \label{case3_lem:double_bad} \(b \simeq_a \OPT(a)\) and \(\OPT(a)\) is unsuccessful; \(b\) is popular. 
\end{enumerate}

In Case~\eqref{case1_lem:double_bad}, the edge $(a,b)$ is a blocking pair for $\OPT$, contradicting the stability of $\OPT$.

In Case~\eqref{case2_lem:double_bad}, since \(\OPT(b)\) is unsuccessful, \(\OPT(b)\) is 2-promoted and was rejected by $b$ in that status. On the other hand, \(a \simeq_b \OPT(b)\),  $a$ is not $2$-promoted, and at the end of the algorithm $b$ holds a proposal from $a$,  contradicting the fact that $b$ rejected a proposal from  $\OPT(b)$ when he was $2$-promoted. 

In Case~\eqref{case3_lem:double_bad}, we have a contradiction to Remark~\ref{rem:bouncing_step} where $b':=\OPT(a)$.

\end{proof}

\begin{corollary}\label{cor:total_effectiveness}
The number of good inputs is not smaller than the number of bad outputs.
\end{corollary}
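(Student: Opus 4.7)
The plan is to exploit the symmetry in the definitions of inputs and outputs: an edge $(a,b)$ in $G'$ is an input to $b$ exactly when it is neither in $\OPT$ nor in $M$, and it is an output from $a$ under the exact same condition. Thus the multiset of input edges and the multiset of output edges are literally the same multiset (the parallel-edge convention in the preceding paragraph ensures this remains true when $G'$ has two copies of an edge $(a,b)$, since in that case one copy is assigned to $M$ and the other copy serves simultaneously as the unique input to $b$ and the unique output from $a$).

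Once this identification is in hand, the corollary reduces to a one-line counting argument using Lemma~\ref{lem:double_bad}. First I would partition the common set of input/output edges according to the bad/good labels applied from the two sides. Lemma~\ref{lem:double_bad} rules out the ``bad-bad'' class, so every edge that is a bad output falls into the ``bad output / good input'' class. In particular, each bad output is a good input, giving an injection (actually the identity map on edges) from the set of bad outputs into the set of good inputs, which yields the claimed inequality.

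I do not expect any real obstacle here; the only thing to be careful about is the bookkeeping for parallel edges, which is already resolved by the convention stated just after the definition of inputs and outputs. No case analysis on promotion statuses or preferences is needed — all of that work was done inside Lemma~\ref{lem:double_bad}, and the corollary is purely a combinatorial consequence of that lemma together with the observation that ``input'' and ``output'' describe the same edges of $G'$ viewed from opposite endpoints.
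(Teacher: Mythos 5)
Your proof is correct and follows essentially the same route as the paper: both rest on the observation that the input edges and the output edges of $G'$ are the same set, so Lemma~\ref{lem:double_bad} forces every bad output to be a good input. The paper phrases this as a counting argument by contradiction, while you state the identity map directly; the content is identical.
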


\begin{proof}
Let us assume the contrary, i.e., let us assume that the number of bad outputs is larger than the number of good inputs. Then there is an edge in $G'$ which is a bad input and is not a good output. In other words, there is an edge which is both a bad input and a bad output, contradicting Lemma~\ref{lem:double_bad}.
\end{proof}

\subsection{Cost, Effectiveness}

For a man $a\in A$, we define his \emph{cost} as follows:
\[
\cost(a):=\begin{cases}
				\deg(a)+1&\text{if}\quad a \quad\text{has a bad output}\\
				\deg(a)&\text{otherwise}
		\end{cases}\,,
\]
For a woman $b\in B$, we define her \emph{cost} as follows
\[
\cost(b):=\begin{cases}
				\deg(b)-1&\text{if}\quad b \quad\text{has a good input}\\
				\deg(b)& \text{otherwise}
		\end{cases}\,,
\]
where $\deg$ denotes the degree of the corresponding node in $G'$. Note that if $G'$ has two parallel edges $(a,b)$, then we have $\deg(a)=\deg(b)=2$. For a node set $S\subseteq A\cup B$, we define the \emph{cost} of $S$ as the sum of costs of all the nodes in $S$, i.e.,
\[
\cost(S):=\sum_{v\in S}\cost(v)\,,
\]
and the \emph{effectiveness} of $S$ as the number of good inputs to the nodes in $S$ minus the number of bad outputs from the nodes in $S$, i.e.,
\[
\effect(S):= \sum_{v\in S}\deg(v)-\cost(S)\,.
\]

\begin{lemma}\label{lem:3cost}
Let \(a\in A\) and \(b\in B\) be such that \(\OPT(a) = b\). Then \(\cost(\{a,b\}) \geq 2\) holds. Moreover, if $\deg(a)\geq 1$, then \(\cost(\{a,b\}) \geq 3\) holds. 
\end{lemma}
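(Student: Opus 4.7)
The plan is to do a case analysis on $\deg(a)$ and $\deg(b)$, using the arithmetic bound $\cost(a) + \cost(b) \geq \deg(a) + \deg(b) - 1$ together with three structural facts: (i) if $\deg(a) < 2$, then $a$ must have stopped making proposals, so $a$ is $2$-promoted and every woman in $N(a)$ --- including $b = \OPT(a)$ --- has rejected $a$ as a $2$-promoted man, making $b$ popular; moreover, at the moment of that rejection $b$ held two proposals from men $a_1, a_2$ with $a_1, a_2 >_b a$ strictly (the only way for a proposal to be superior to one from a $2$-promoted proposer); (ii) stability of $M$, which forbids $(a,b)$ from blocking $M$ in $G$; (iii) Remark~\ref{rem:bouncing_step} on the bouncing step.

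For Part~1, the arithmetic bound handles all cases with $\deg(a) + \deg(b) \geq 3$. The extreme case $\deg(a) = \deg(b) = 0$ is ruled out by (ii), since otherwise $(a,b)$ would block $M$. In the remaining small-degree configurations, (ii) again pins the unique $G'$-edge adjacent to $\{a,b\}$ into $M$, making it neither an input nor an output; a careful examination using (i) and (iii) --- in particular ruling out good inputs to $b$ in these configurations --- then yields $\cost(\{a,b\}) \geq 2$.

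For Part~2 (assuming $\deg(a) \geq 1$) I split on $\deg(a)$. When $\deg(a) = 2$, $\cost(a) \geq 2$ and it suffices to force $\cost(b) \geq 1$ or a bad output from $a$; the problematic configurations $\deg(b) = 0$ and $\deg(b) = 1$ with a good input both have $b$ unsuccessful, and this activates the bad-output condition ``$b' \simeq_a b$ with $\OPT(a)$ unsuccessful'' whenever an output edge of $a$ ends at a woman tied with $b$ on $a$'s preference list. Combined with the condition ``$b' >_a b$ with $a$ not $2$-promoted'' and (iii), one exhibits a bad output, yielding $\cost(a) = 3$. When $\deg(a) = 1$, (i) applies: $a$ is $2$-promoted, $b$ is popular, and $b$'s held set once contained two proposers strictly above $a$. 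Since $a$ is unsuccessful, the good-input conditions for $b$ simplify to ``$a >_b a'$ for the input edge $(a',b)$''. A monotonicity argument on $b$'s held set --- replacements in the top-$2$ are strictly superior (hence above $a$), withdrawals upon the promotion of an $a_i$ are compensated by the eventual re-proposal of $a_i$ in higher status (which ejects any weaker intruder accepted in the gap), and (iii) closes the residual tied-and-unsuccessful cases --- shows that every proposer holding at $b$ at the end is $\geq_b a$. Hence $b$ has no good input, and the same analysis forces $\deg(b) \geq 2$, so $\cost(b) \geq 2$ and $\cost(\{a,b\}) \geq 3$.

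The hardest step will be the monotonicity argument in the $\deg(a) = 1$ sub-case: tracking how $b$'s held set evolves across withdrawals and re-acceptances, and showing via Remark~\ref{rem:bouncing_step} that any proposer $<_b a$ who is temporarily accepted in the gap following a withdrawal is displaced before termination.
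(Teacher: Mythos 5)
Your overall architecture matches the paper's: a case analysis on $(\deg(a),\deg(b))$ driven by two structural facts, namely that an unsuccessful $a$ was rejected by $b=\OPT(a)$ as a $2$-promoted man (so $b$ is popular, ends the algorithm holding two proposals, and none of them can be a good input because $\OPT(b)=a$ is unsuccessful), and that when $a$ is successful but $b$ has small degree, $a$ never proposed to $b$, hence is basic and has a bad output. Your explicit isolation of the configuration $\deg(a)=2$, $\deg(b)=1$ with a good input (where $\cost(b)=0$ and one must fall back on exhibiting a bad output from $a$) is in fact a finer split than the paper's corresponding case, and it is the right precaution there.

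However, two of your claims about the algorithm's dynamics are incorrect, and they sit exactly under the step you flag as hardest. First, when $b$ rejects a proposal from a $2$-promoted man $a$, the two proposals she keeps need \emph{not} be from men strictly preferred to $a$: the rejection rule only demands that the rejected proposal be \emph{not superior to} the kept ones, and a kept proposal from $a'$ with $a'\simeq_b a$ and $a'$ also $2$-promoted satisfies this (as does the special tie-breaking case, where one of two proposals from the same man is rejected). So the invariant you intend to propagate, ``every proposer held by $b$ is strictly above $a$,'' is false; only the weak version $a'\geq_b a$ holds. Fortunately the weak version is all that is needed, because an input $(a',b)$ with $b$ popular, $a'\simeq_b\OPT(b)$ and $\OPT(b)$ unsuccessful is already a bad input by definition --- but your sketch does not use this clause, and a proof built on the strict invariant would break at the first tied $2$-promoted competitor. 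Second, there are no ``withdrawals upon the promotion of an $a_i$'': when a man is promoted, only the proposal that was just rejected relocates to the top of his list; any proposal of his already held by a woman stays where it is. A held proposal leaves $b$ only through a rejection event (in which everything kept is weakly preferred to what is rejected), a bounce, or a forward, and it is precisely the bounce/forward interactions that Remark~\ref{rem:bouncing_step} and the size-two tie structure are there to control. The ``compensation by eventual re-proposal'' mechanism your monotonicity argument relies on therefore does not exist, and the argument as planned would not go through. The paper avoids all of this bookkeeping: it reads the weak preference $a'\geq_b a$ for the finally held proposals directly off the rejection rule, concludes that $b$ has no good input and hence $\cost(b)=\deg(b)=2$, and never needs to track the evolution of $b$'s held set.
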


\begin{proof}
One of the following is true:
 \begin{enumerate}[label*=\roman*.]
 	\item \label{case0_lem:3cost} $\deg(a)=0$.
	\item \label{case1_lem:3cost} $\deg(a)\leq 1$ and $\deg(b)\leq 1$.
	\item \label{case2_lem:3cost} $\deg(a) = 2$ and $\deg(b)=0$.
	\item \label{case3_lem:3cost} $\deg(a) = 2$ and $\deg(b)\geq 1$.
	\item \label{case4_lem:3cost}  $\deg(a) = 1$, $\deg(b) = 2$ and $b$ has a bad input or no input.
	\item \label{case5_lem:3cost}  $\deg(a) = 1$, $\deg(b) = 2$ and $b$ has a good input.
\end{enumerate}

In Case~\eqref{case0_lem:3cost}, $a$ is unsuccessful. Since $(a,b)$ is an edge in $G$, $b$ rejected a proposal from $a$ at least once and so $b$ is popular. Thus, there are two distinct edges $(a',b)$, $(a'',b)$ in $G'$ such that $a\leq_b a'$ and $a\leq_b a''$. It is straightforward to check that neither $(a',b)$ nor $(a'',b)$ is a good input. Thus we have $\cost(b)= 2$, implying the desired inequality of $\cost(\{a,b\})\geq \cost(b)=2$. 

In Case~\eqref{case1_lem:3cost},  $a$ is unsuccessful. Since $(a,b)$ is an edge in $G$, $b$ rejected a proposal from $a$ at least once and so $b$ is popular. However, $\deg(b)\leq1$ and so $b$ is not popular, contradiction.

In Case~\eqref{case2_lem:3cost}, $a$ did not ever propose to $b$ during the algorithm. Thus in $G'$ there are two distinct edges $(a,b')$ and $(a,b'')$ with $b\leq_a b'$, $b\leq_a b''$. Since $\deg(b)=0$, one of the edges $(a,b')$, $(a,b'')$ is neither in $M$ nor in $\OPT$, giving rise to a bad output from $a$. Thus, we have $\cost(a)= 3$, implying $\cost(\{a,b\})=3\geq 3$. 

In Case~\eqref{case3_lem:3cost}, we have
\[
\cost(\{a,b\})=\underbrace{\cost(a)}_{\geq \deg(a)=2}+ \underbrace{\cost(b)}_{\geq 1} \geq  2+1= 3\,, 
\]
providing the desired inequality.

In Case~\eqref{case4_lem:3cost}, we have
\[
\cost(\{a,b\})=\underbrace{\cost(a)}_{\geq \deg(a)=1}+ \underbrace{\cost(b)}_{= \deg(b)=2} \geq  1+2=3\,, 
\]
providing the desired inequality.

In Case~\eqref{case5_lem:3cost}, $a$ is unsuccessful. Since $(a,b)$ is an edge in $G$, $b$ rejected a proposal from $a$ at least once and so $b$ is popular. Let $(a',b)$ be the good input to $b$. Since $b$ rejected a proposal from $a$, we have that $a'\geq_b a$.  Thus $b$ is popular, $\OPT(b)=a$ is unsuccessful, and $a'\geq_b a$, showing that the input $(a',b)$ is a bad input, contradiction.
\end{proof}

\subsection{Connected components in $\OPT+M$}

In this section, we start to relate the ratio between $|\OPT|$ and $|M|$ to the ratios between $|\OPT\cap C|$ and $\cost(C)$ for connected components in $\OPT+M$. For simplicity of exposition,  isolated nodes in $\OPT+M$ are considered by us as connected components in $\OPT+M$.

\begin{lemma}\label{lem:low_cost}
If \(|\OPT| / |M| > 4/3\), then there exists a connected component \(C\) in \(\OPT + M\) with \(\cost (C) < 3 |\OPT \cap C|\).
\end{lemma}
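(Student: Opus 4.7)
The plan is to argue via the contrapositive: suppose that every connected component $C$ of $\OPT+M$ satisfies $\cost(C) \geq 3|\OPT \cap C|$; I will show $|\OPT|/|M| \leq 4/3$, contradicting the hypothesis. Summing this componentwise assumption over all components of $\OPT+M$ immediately gives $\cost(A \cup B) \geq 3|\OPT|$, so the entire argument reduces to proving the global upper bound $\cost(A \cup B) \leq 4|M|$; combining the two then gives $3|\OPT| \leq 4|M|$, as needed.

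To prove $\cost(A \cup B) \leq 4|M|$, I would expand the sum using the definitions of $\cost(a)$ and $\cost(b)$. Let $A_{\mathrm{bo}} \subseteq A$ be the set of men with a bad output and $B_{\mathrm{gi}} \subseteq B$ the set of women with a good input. Since every node of $G'$ has degree at most two, with at most one incident edge in $M$ and at most one in $\OPT$, each man has at most one output edge and each woman at most one input edge; consequently $|A_{\mathrm{bo}}|$ equals the number of bad-output edges and $|B_{\mathrm{gi}}|$ equals the number of good-input edges. Summing the definitions yields
\[
\cost(A \cup B) \;=\; 2|E'| + |A_{\mathrm{bo}}| - |B_{\mathrm{gi}}|,
\]
and Corollary~\ref{cor:total_effectiveness} gives $|B_{\mathrm{gi}}| \geq |A_{\mathrm{bo}}|$, so $\cost(A \cup B) \leq 2|E'|$.

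The crucial structural observation is that $G'$ has maximum degree at most two---each man has at most two proposals and each woman holds at most two at any point---so $G'$ decomposes as a disjoint union of paths and cycles (with parallel edges forming length-two cycles). Since $M$ is a maximum matching in $G'$, and on any such path or cycle the number of edges is at most twice the maximum matching size, summing over components gives $|E'| \leq 2|M|$. Combining the two bounds,
\[
\cost(A \cup B) \;\leq\; 2|E'| \;\leq\; 4|M|,
\]
which together with $\cost(A \cup B) \geq 3|\OPT|$ yields $|\OPT|/|M| \leq 4/3$ and completes the contrapositive.

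The main obstacle is spotting the right global bound and its structural justification. A naive count using only that each node of $V(M)$ can carry one extra non-$M$ edge gives $|E'| \leq 3|M|$ and hence only $\cost(A \cup B) \leq 6|M|$, which is too weak; it is the maximum-degree-two condition on $G'$ (which forces $G'$ to be a union of paths and cycles) that tightens this to $|E'| \leq 2|M|$, precisely closing the gap to $4|M|$.
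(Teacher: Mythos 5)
Your proof is correct and follows essentially the same route as the paper's: assume the componentwise bound holds everywhere, sum to get $\cost(A\cup B)\geq 3|\OPT|$, use Corollary~\ref{cor:total_effectiveness} (i.e., $\effect(A\cup B)\geq 0$) to bound $\cost(A\cup B)\leq 2|E'|$, and use $|E'|\leq 2|M|$ from the degree-two structure of $G'$. The only difference is that you spell out the path/cycle decomposition justifying $|E'|\leq 2|M|$ and the one-output-per-man, one-input-per-woman counting, which the paper leaves implicit.
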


\begin{proof}
Assume for contradiction that \(|\OPT| / |M| > 4/3\) and that \(\cost (C) \geq 3 |\OPT \cap C|\) for all components \(C\). Then, we have
\[
\sum_{C \in \OPT + M} \cost (C) \geq 3|\OPT|\,,
\]
where the summation goes over the connected components in  $\OPT + M$.

On the other side, $M$ is a maximum cardinality matching in the graph $G'=(A\cup B, E')$, where each node in $G'$ has degree at most $2$. Hence, 
\[
|M|\geq |E'|/2=\sum_{v\in A\cup B}\deg(v)/4\,.
\]
Moreover, we have 
\[
\sum_{C \in \OPT + M}  \effect(C)=\effect(A\cup B)\geq 0\,,
\]
where the inequality is due to Corollary~\ref{cor:total_effectiveness}.
Combining the inequalities above, we obtain
\begin{align*}
4|M|\geq \sum_{v\in A\cup B}\deg(v)-\effect(A\cup B)=\cost(A\cup B)=\sum_{C \in \OPT + M} \cost (C) \\\geq 3|\OPT|\,,
\end{align*}
contradiction.
\end{proof}

\begin{remark}\label{rem:trivial_component}
Let  \(C\) be a connected component of $M+\OPT$ which consists of a single node. Then we have \(\cost (C) \geq 3 |\OPT \cap C|\).
\end{remark}

\subsubsection{Alternating Paths, Alternating Cycles and $\OPT$-Augmenting Paths}\label{sec:alternating_opt_augmenting}

The following corollary of Lemma~\ref{lem:3cost} establishes an upper bound on the number of $\OPT$ edges in a connected component of $M+\OPT$ in terms of the component's cost, applying only to alternating paths, alternating cycles, and $\OPT$-augmenting paths. Note that an edge which is in both $M$ and $\OPT$ is considered by us to be a trivial alternating cycle if it corresponds to two parallel edges in~$G'$.

\begin{corollary}\label{cor:nonaug}
Let \(C\) be a connected component of $M+\OPT$ that is an alternating path, alternating cycle, or $\OPT$-augmenting path. Then we have \(\cost (C) \geq 3 |\OPT \cap C|\). 
\end{corollary}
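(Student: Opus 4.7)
The plan is to partition $C$ along its $\OPT$ edges and apply Lemma~\ref{lem:3cost} to each pair $(a,b)\in\OPT\cap C$. The lemma gives $\cost(\{a,b\})\geq 2$ in general, improved to $\cost(\{a,b\})\geq 3$ whenever $\deg(a)\geq 1$. Since $M\subseteq G'$, the stronger bound is automatic for every $\OPT$ edge whose man endpoint is matched by $M$; the only potentially problematic $\OPT$ edges are therefore those whose man endpoint is unmatched by $M$, and these sit only at path endpoints. The burden of the proof is to confirm, for each structural type of $C$, that any resulting one-unit shortfall is absorbed either by a node outside all $\OPT$ edges or by additional structure extracted from the algorithm.

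For an alternating cycle, every node of $C$ has degree two in $M+\OPT$ and is in particular matched by $M$, so every man in an $\OPT$ edge of $C$ satisfies $\deg(a)\geq 1$ and Lemma~\ref{lem:3cost} gives $\cost(\{a,b\})\geq 3$ for each $\OPT$ edge. Since the $\OPT$ edges of $C$ partition its nodes, summing yields $\cost(C)\geq 3|\OPT\cap C|$ at once.

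For an alternating (non-augmenting) path, exactly one endpoint $v_0$ is unmatched by $M$, and the opposite endpoint $v^\ast$ lies outside every $\OPT$ edge and is matched by $M$. If $v_0\in B$, every man in an $\OPT$ edge of $C$ is matched by $M$, Lemma~\ref{lem:3cost} gives $3$ per edge, and the bound follows since $\cost(v^\ast)\geq 0$. If $v_0\in A$ and $\deg(v_0)=0$, the end edge $(v_0,v_1)$ contributes only $2$; I recover the missing unit by a short check that $\cost(v^\ast)\geq 1$, using $\OPT(v^\ast)=\varnothing$ to observe that any good input to $v^\ast$ requires $v^\ast$ to be unpopular, and combining this with $\deg(v^\ast)\geq 1$ (since $v^\ast$ is matched by $M$) to rule out $\cost(v^\ast)=0$ via a short case split on $\deg(v^\ast)\in\{1,2\}$.

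The delicate case is the $\OPT$-augmenting path: both endpoints are unmatched by $M$, every node of $C$ lies in some $\OPT$ edge, and no leftover node is available. The two endpoints lie on opposite sides of the bipartition, so exactly one of them is a man $v_0\in A$; the $\OPT$ edge at the other (woman) endpoint has its man inside the path, matched by $M$, hence contributes at least $3$. If the $\OPT$-augmenting path consists of a single edge, then $\deg(v_0)=0$ would force the woman endpoint to have degree two in $G'$ and therefore to be matched by $M$, contradicting the endpoint condition; so $\deg(v_0)\geq 1$ and the bound follows. The remaining subcase, $\deg(v_0)=0$ on a longer path, is the main obstacle: here the plan is to use the fine structure from the proof of Lemma~\ref{lem:3cost} (that $v_1$ is popular with $\deg(v_1)=2$ and holds a non-$M$ edge $(a',v_1)$ weakly preferred over $v_0$), combine it with Lemma~\ref{lem:double_bad} to classify $(a',v_1)$ as a good output from $a'$, and then invoke the stability of $\OPT$ against the pair $(v_2,v_1)$, aiming either to contradict this subcase or to force an additional bad output inside $C$ that supplies the missing unit of cost.
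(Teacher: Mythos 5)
Your treatment of alternating cycles and alternating paths is correct and essentially the paper's own argument: every $\OPT$ edge whose man is matched by $M$ has $\deg(a)\geq 1$ in $G'$ and so contributes $3$ by Lemma~\ref{lem:3cost}, and in the one deficient configuration (both path endpoints in $A$, the $\OPT$-matched endpoint of degree zero) the $M$-matched endpoint supplies the missing unit. A minor slip there: since the path has even length, that opposite endpoint $v^\ast$ is a man, so the digression about ``good inputs'' and unpopularity is vacuous --- $\cost(v^\ast)\geq\deg(v^\ast)\geq 1$ is immediate from the definition of cost for men, which is what the paper uses.

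The genuine gap is in your third case: you have swapped $\OPT$-augmenting with $M$-augmenting. An $\OPT$-augmenting path has both endpoints \emph{matched by $M$ and unmatched by $\OPT$} (its end edges lie in $M$), so no $\OPT$ edge touches a path endpoint, every man in an $\OPT$ edge is interior and hence $M$-matched, and the bound follows from Lemma~\ref{lem:3cost} exactly as for cycles. This is the easy case; your opening paragraph already contains the observation needed to close it (``problematic $\OPT$ edges sit only at path endpoints''), but you never draw that conclusion. The path you instead labour over --- both endpoints unmatched by $M$, every node in an $\OPT$ edge --- is an \emph{$M$-augmenting} path, which is explicitly outside the scope of this corollary: the paper excludes lengths $1$ and $3$ via Lemma~\ref{lemma:three_path_original_paper} and handles length at least $5$ in Lemma~\ref{lem:aug}. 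Moreover, what you offer there is a plan rather than a proof (``aiming either to contradict this subcase or to force an additional bad output''), and no such purely local repair at the degree-zero endpoint suffices: for $M$-augmenting paths the termwise summation genuinely falls one unit short, and the paper recovers it only through the global ``points right'' induction propagated along the whole path (Lemma~\ref{lem:interior}, Corollary~\ref{cor:induction}, Lemma~\ref{lem:start}) and a refined analysis of the last four nodes (Lemma~\ref{lem:end}). So the case you should have proved is missing, and the case you attempted does not belong to the statement and is not actually proved.
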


\begin{proof}
Let  \(C\) be a connected component of $M+\OPT$ which is an alternating path, alternating cycle, or $\OPT$-augmenting path. Then one of the following is true:
\begin{enumerate}[label*=\roman*.]
	\item \label{case1_cor:nonaug} $C$ is an $\OPT$-augmenting path of the form $a_0-b_1-\ldots-a_k-b_{k+1}$.
	\item \label{case2_cor:nonaug} $C$ is an alternating cycle of the form $a_1-b_1-\ldots-a_k-b_k-a_1$, where $(a_1,b_1)\in \OPT$.
	\item \label{case3_cor:nonaug} $C$ is an alternating path of the form $b_1-a_1-\ldots-b_k-a_{k}-b_{k+1}$, where $a_1\in A$ and $(a_1,b_1)\in \OPT$.
	\item \label{case4_cor:nonaug} $C$ is an alternating path of the form $a_1-b_1-\ldots-a_k-b_{k}-a_{k+1}$, where $a_1\in A$ and $(a_1,b_1)\in \OPT$. 
\end{enumerate}

In Cases~\eqref{case1_cor:nonaug}, \eqref{case2_cor:nonaug} and \eqref{case3_cor:nonaug}, by Lemma~\ref{lem:3cost}, for each $i=1,\ldots,k$ we have $\cost(\{a_i,b_i\})\geq 3$. Thus, we have the desired inequality:
\[
\cost (C)\geq\sum_{i=1}^k \underbrace{\cost(\{a_i,b_i\})}_{\geq 3} \geq 3 k=3 |\OPT \cap C|\,.
\]

In Case~\eqref{case4_cor:nonaug}, by Lemma~\ref{lem:3cost}, for each $i=2,\ldots,k$ we have $\cost(\{a_i,b_i\})\geq 3$ and also $\cost(\{a_1,b_1\})\geq 2$. Since $\cost(a_{k+1})\geq 1$, we have the desired inequality:
\[
\cost (C)=\underbrace{\cost(a_{k+1})}_{\geq 1}+\underbrace{\cost(\{a_1,b_1\})}_{\geq 2}+\sum_{i=2}^k \underbrace{\cost(\{a_i,b_i\})}_{\geq 3}\geq 3 k=3 |\OPT \cap C|\,.
\]
\end{proof}

The next remark is an immediate consequence of Lemma~\ref{lem:3cost}.

\begin{remark}
An edge $(a,b)$ that exists in both $M$ and $\OPT$ defines a connected component $C$ in $M+\OPT$. If $(a,b)$ does not correspond to a parallel edge in $G'$, then $\cost(C)\geq 3 =3| C\cap \OPT|$.
\end{remark}

\subsubsection{$M$-Augmenting Path}
In this section, we study connected components in $M+\OPT$ which are $M$-augmenting paths. First, we state the result of Huang and Kavitha~\cite{HuangKavitha2015}, showing that there exists no $M$-augmenting path of length $1$ or of length $3$ in $M+\OPT$. 
\begin{lemma}[\cite{HuangKavitha2015}]
\label{lemma:three_path_original_paper}
There is no $M$-augmenting path in $M+ \OPT$ of length $1$ or of length $3$.
\end{lemma}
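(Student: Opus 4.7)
The statement splits into the length-1 and length-3 subcases, which I would handle separately by contradiction. The length-1 subcase is immediate: if $(a,b)$ is the augmenting path, then $(a,b)\in \OPT \setminus M$ and both $a, b$ are unmatched in $M$, so $b>_a \varnothing = M(a)$ and $a >_b \varnothing = M(b)$, making $(a,b)$ a blocking pair for $M$ and contradicting the stability of $M$ from the theorem of Section~\ref{section:algorithm}.

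For the length-3 subcase, write the path $a_0 - b_1 - a_1 - b_2$, where $(a_0, b_1), (a_1, b_2) \in \OPT$, $(a_1, b_1) \in M$, and $a_0, b_2$ are unmatched in $M$. I would first extract comparative preferences from stability. Non-blocking of $(a_0, b_1)$ in $M$ (with $a_0$ having no fallback partner) forces $a_1 \geq_{b_1} a_0$; non-blocking of $(a_1, b_2)$ in $M$ forces $b_1 \geq_{a_1} b_2$; non-blocking of $(a_1, b_1)$ in $\OPT$ forces $a_0 \geq_{b_1} a_1$ or $b_2 \geq_{a_1} b_1$. Combined, at least one of the two ties $a_0 \simeq_{b_1} a_1$ or $b_1 \simeq_{a_1} b_2$ must hold, and in each case the plan is to track the proposal history and derive a contradiction with $a_0$ or $b_2$ being unmatched.

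In the case $a_0 \simeq_{b_1} a_1$, since $a_0$ is unmatched he eventually proposes as a $2$-promoted man and is rejected by $b_1$ in that status; at that moment $b_1$'s two held proposals must be strictly preferred to $a_0$ on her list, because a tied man cannot be superior when the rejected proposal already has maximal promotion. Since $b_1$ nevertheless ends up holding a proposal from $a_1$ (as $(a_1,b_1)\in M$), $a_1$ must land a proposal on $b_1$ after this final rejection; applying the bounce rule (or the forward rule, using the tie $b_1 \simeq_{a_1} b_2$ when it also holds, together with the fact that $b_2$ has not yet rejected $a_1$ in his current status) should place a proposal on $b_2$ that she subsequently retains, making $b_2$ successful and contradicting her being unmatched. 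The case $b_1 \simeq_{a_1} b_2$ is treated similarly by firing the forward step from $a_1$'s perspective at the moment he first accumulates two proposals aimed at $b_1$.

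The main obstacle will be the fine-grained bookkeeping around promotion statuses, the exceptional three-same-statused-men rejection rule at $b_1$, and the possibility that $a_1$'s proposal to $b_2$ is later displaced. One must simultaneously align $a_1$'s status at the instants he proposes to $b_1$ and $b_2$ with the windows during which $b_2$'s rejection history still admits the forward step, and verify that a successfully bounced or forwarded proposal on $b_2$ is never subsequently replaced by a superior one in a way that leaves $b_2$ unsuccessful. This layered case analysis, which is exactly the content of the original Huang--Kavitha argument, is where the local rules of the proposal/bounce/forward mechanism do all the work; the skeleton above identifies the branches in which it must be carried out.
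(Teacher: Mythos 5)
Your length-$1$ case is complete and correct, and your stability bookkeeping for the length-$3$ case (forcing $a_0 \simeq_{b_1} a_1$ or $b_1 \simeq_{a_1} b_2$) is the right opening move; note, though, that the paper does not prove this lemma at all but imports it from~\cite{HuangKavitha2015}, so there is no in-paper argument to match against. What you have written for the heart of the length-$3$ case is, by your own admission, a plan rather than a proof, and the plan as stated would not close. Concretely: (a) your target contradiction, ``a proposal lands on $b_2$ and she retains it, making $b_2$ successful,'' does not work, because \emph{successful} means holding \emph{two} proposals at the end; a woman of degree one in $G'$ can perfectly well be left unmatched by $M$, so a single retained proposal at $b_2$ contradicts nothing. (b) In the branch where only $a_0 \simeq_{b_1} a_1$ holds you may have $b_1 >_{a_1} b_2$, and then neither the bounce nor the forward step can move anything from $b_1$ to $b_2$, since both require a tie on $a_1$'s side of the preference list. (c) The assertion that after $b_1$ rejects the $2$-promoted $a_0$ her two held proposals must be from men she strictly prefers to $a_0$ is false: the rejected proposal only has to be non-superior to the kept ones, which may come from tied $2$-promoted men.

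The contradictions that actually close the two branches are different from the ones you aim at, and both are available from the paper's toolkit. Since $a_0$ is unmatched by $M$ he is unsuccessful, so $b_1$ rejected him as a $2$-promoted man and is therefore popular; since $b_2$ is unmatched by $M$ she has degree at most one in $G'$, hence never holds two proposals at once and never rejects anyone. In the branch $b_1 \simeq_{a_1} b_2$, Remark~\ref{rem:bouncing_step} (applied with $b := b_1$ and $b' := b_2$) forces $b_1$ to be unpopular, a contradiction. In the branch $a_0 \simeq_{b_1} a_1$, the superiority rules imply that $b_1$ can end up holding a proposal from a man tied with the rejected $2$-promoted $a_0$ only if that man is himself $2$-promoted (this is exactly the reasoning of Case~ii in the proof of Lemma~\ref{lem:double_bad}); but $a_1$ being $2$-promoted means every woman in $N(a_1)$, in particular $b_2$, rejected him, contradicting that $b_2$ never rejects. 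Replacing your bounce/forward plan with these two observations completes the argument.
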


Now let us consider $M$-augmenting paths in $M+\OPT$ of length at least~$5$. Given an $M$-augmenting path of the form $a_0-b_0-a_1-\ldots-a_k-b_k$ with $a_0\in A$, for $i=0,\ldots,k-1$ we say that \emph{$b_i$ points right} if one of the following is true:
\begin{itemize}
	\item \(a_{i+1}>_{b_i} a_i\).
	\item \(a_{i+1}\simeq_{b_i} a_i\) and at the end of the algorithm \(a_{i+1}\) is not basic.
\end{itemize}

\begin{lemma}\label{lem:interior}
Let $a_0-b_0-a_1-\ldots-a_k-b_k$ be an $M$-augmenting path in $M+ \OPT$ of length $2k+1$, $k\geq 2$, where $a_0\in A$. Then for every $i=1,\ldots, k-1$, at least one of the following is true:
\begin{enumerate}
	\item \label{case1_lem:interior_state}  \(\cost(\{a_i, b_i\}) \geq 4\).
	\item \label{case2_lem:interior_state}  \(b_i\) rejected a proposal from \(a_i\) during the algorithm and $b_i$ points right. 
	\item \label{case3_lem:interior_state}  \(b_{i-1}\) did not reject any proposals from \(a_{i-1}\) during the algorithm.
	\item \label{case4_lem:interior_state}  at the end of the algorithm, \(a_i\) is basic and \(b_{i-1} >_{a_i} b_i\). 
\end{enumerate}
\end{lemma}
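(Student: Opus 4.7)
The plan is to prove the contrapositive: suppose an interior index $i\in\{1,\ldots,k-1\}$ violates all four conditions (1)--(4), and derive a contradiction.

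First, I would extract structural consequences from each negation. Since $(a_i,b_{i-1})\in M\subseteq G'$ forces $\deg(a_i)\geq 1$, Lemma~\ref{lem:3cost} gives $\cost(\{a_i,b_i\})\geq 3$, and the negation of (1) pins this down to $\cost(\{a_i,b_i\})=3$. A short case analysis on $\deg(a_i),\deg(b_i)\in\{1,2\}$ (using that $\deg(b_i)\geq 1$ via the $M$-edge $(a_{i+1},b_i)$ and that $\deg(a_i)=\deg(b_i)=1$ is ruled out by Lemma~\ref{lem:3cost}) then shows, in every remaining configuration, that $a_i$ has no bad output. The negation of (3) gives that $b_{i-1}$ is popular, having rejected a proposal from $a_{i-1}$. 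The negation of (4) gives that either $a_i$ is $1$- or $2$-promoted at the end, or $a_i$ is basic with $b_i\geq_{a_i}b_{i-1}$. The negation of (2) gives that either $b_i$ never rejected a proposal from $a_i$, or $b_i$ does not point right, i.e.\ $a_i>_{b_i}a_{i+1}$ or ($a_i\simeq_{b_i}a_{i+1}$ and $a_{i+1}$ is basic at the end).

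Next I would split on the end-of-algorithm status of $a_i$. If $a_i$ is $1$- or $2$-promoted, then every woman in $N(a_i)$, in particular $b_i$, rejected a proposal from $a_i$ as a basic man, so the negation of (2) forces ``$b_i$ does not point right''. If $a_i$ is basic with $b_i>_{a_i}b_{i-1}$, then, since $a_i$'s proposal rests at the strictly less preferred $b_{i-1}$, the proposal must at some point have sat at $b_i$ and then been either rejected (again forcing ``$b_i$ does not point right'' via the negation of (2)) or bounced to a woman tied with $b_i$ on $a_i$'s list. In the ``rejected'' branches I would reconstruct the sequence of proposals held by $b_i$ over time to derive a contradiction, most likely either from $\OPT$-stability (producing a blocking pair for $\OPT$ involving $b_i$) or by overloading the identity $\cost(\{a_i,b_i\})=3$ (forcing an extra bad output at $a_i$ or ruling out the required good input at $b_i$ in the degree $(2,2)$ subcase).

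The delicate subcases are the bounce scenario and the indifference case $a_i$ basic with $b_i\simeq_{a_i}b_{i-1}$, where $a_i$'s proposal may have skipped $b_i$ via tie-breaking or a bounce. I expect the resolution there to combine the popularity of $b_{i-1}$ (from the negation of (3)), Remark~\ref{rem:bouncing_step}, the special rejection rule for three indifferent proposals, and the observation that once $b_i$ holds two proposals any subsequent bounceable proposal from $a_i$ must be bounced rather than rejected. The main obstacle I anticipate is exactly this bookkeeping: lining up the chronology of proposals at $b_{i-1}$, $a_i$, and $b_i$ under the algorithm's bouncing, forwarding, and tie-breaking rules so that the failing conditions plus the sharp identity $\cost(\{a_i,b_i\})=3$ together yield either a blocking pair for $\OPT$ or a violation of the algorithm's acceptance rules.
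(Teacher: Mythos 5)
Your contrapositive framing is sound, and several of your preliminary deductions match what the paper needs: the negation of (1) together with Lemma~\ref{lem:3cost} does pin $\cost(\{a_i,b_i\})$ to $3$, which correctly rules out a bad output at $a_i$ and, once $b_i$ is known to be popular (hence $\deg(b_i)=2$), forces $b_i$ to have a good input. But the proof is not closed: the two branches you label as delicate and defer with ``I would reconstruct the sequence of proposals \ldots most likely'' and ``I expect the resolution there to combine \ldots'' are exactly where the content of the lemma lives, and you do not supply the argument. Concretely, in the branch where $b_i$ has rejected a proposal from $a_i$ and is forced to have a good input $(a',b_i)$, the decisive step is: goodness of the input gives $a_i\geq_{b_i}a'$ with $a'\neq a_i$, the rejection gives $a'\geq_{b_i}a_i$ and $a_{i+1}\geq_{b_i}a_i$, so $a'\simeq_{b_i}a_i$, and since ties have size two either $a_{i+1}>_{b_i}a_i$ or $a'=a_{i+1}$; in the latter case $b_i$ ends holding two proposals from $a_{i+1}$ while having rejected a tied $a_i$, which the special rejection rule forbids unless $a_{i+1}$ is non-basic --- either way $b_i$ points right, contradicting your negation of (2). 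Without this tie-of-size-two argument the contradiction does not materialize, and your alternative guesses (a blocking pair for $\OPT$, or an extra bad output at $a_i$) do not in fact arise in this branch. A similar size-two tie argument is needed in the branch where $b_i$ never rejected $a_i$, to show that $a_i$'s second proposal either sits at $b_i$ (raising the cost to $4$) or is a bad output or lands strictly above $b_i$, yielding condition (4).

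Two further remarks on the architecture. The paper proves the lemma directly by a case split on the local data at $a_i$ and $b_i$ (degree of $a_i$; whether $(a_i,b_i)\in G'$; whether $a_i$ has a bad output; whether $b_i$ rejected $a_i$; whether $b_i$ has a good input), establishing one of (1), (2), (4) in every case --- condition (3) is never actually invoked in the proof of this lemma, so the popularity of $b_{i-1}$ that you extract from its negation is a red herring here (it matters only later, in Corollary~\ref{cor:induction} and Lemma~\ref{lem:end}). Also, your ``bounced to a woman tied with $b_i$'' sub-branch for a proposal resting at $b_{i-1}<_{a_i}b_i$ cannot occur: bounce and forward steps move proposals only between tied women, so a proposal can sink strictly below $b_i$ on $a_i$'s list only after $b_i$ rejects $a_i$ in his current status.
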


\begin{proof}
Clearly, the edges \((a_{i}, b_{i-1})\) and \((a_{i+1}, b_i)\) exist in \(G'\) since these edges exist in \(M\). Thus we have \(\deg(a_i)\geq 1\) and \(\deg(b_i)\geq 1\). Moreover, since the edge \((a_i,b_i)\) exists in \(G\), at least one of the following is true: \(\deg(a_i)\geq 2\) or \(\deg(b_i)\geq 2\). Hence, it is enough to consider the following cases: 
\begin{enumerate}[label*=\roman*.]
	\item \label{case0_lem:interior} \(\deg(a_i) = 1\) and \(\deg(b_i)=2\).
  	\item \label{case1_lem:interior} \(\deg(a_i) = 2\).
  		\begin{enumerate}[label*=\roman*.]
		 	\item\label{case10_lem:interior} $b_i$ rejected a proposal from $a_i$ during the algorithm.
				\begin{enumerate}[label*=\roman*.]
   					\item\label{case101_lem:interior} $b_i$ has no good input.
					 \item\label{case102_lem:interior} $b_i$ has a good input.
  			\end{enumerate}
   			\item\label{case11_lem:interior} there is an edge \((a_i, b_i)\) in $G'$. 
			\item\label{case13_lem:interior} $a_i$ has a bad output.	
			\item\label{case14_lem:interior} $a_i$ has a good output; $b_i$ did not reject any proposals from $a_i$ during the algorithm.
  		\end{enumerate}
\end{enumerate}
We would like to note that while the above cases are not mutually exclusive, they cover all the possibilities.

In Case~\eqref{case0_lem:interior}, \(a_i\) is unsuccessful. Thus, $b_i$ rejected a proposal from $a_i$ as a $2$-promoted man. On the other hand, at the end of the algorithm $b_i$ has a proposal from $a_{i+1}$, implying \eqref{case2_lem:interior_state}.

In Case~\eqref{case101_lem:interior}, we have $\deg(b_i)=2$, since $b_i$ rejected a proposal from $a_i$ during the algorithm. Since $b_i$ has no good input, we have $\cost(b_i)=\deg(b_i)=2$. Thus, we have
\[
\cost(\{a_i,b_i\})=\underbrace{\cost(a_i)}_{\geq \deg(a_i)=2}+\underbrace{\cost(b_i)}_{=2}\geq 4\,,
\]
implying~\eqref{case1_lem:interior_state}.

In Case~\eqref{case102_lem:interior}, $b_i$ is popular, since $b_i$ rejected a proposal during the algorithm. Let $(a',b_i)$ be a good input to $b_i$. Since $(a',b_i)$ is a good input to $b_i$, we have $a_i\geq_{b_i}a'$ and $a'\neq a_i$.  Since during the algorithm $b_i$ rejected a proposal from $a_i$, but at the end of the algorithm has proposals from $a_{i+1}$ and $a'$, we have $a_i\leq_{b_i}a_{i+1}$ and $a_i\leq_{b_i}a'$. Because $a_i\geq_{b_i}a'$ and $a_i\leq_{b_i}a'$, we have $a_i\simeq_{b_i}a'$. 

Since $a_i\leq_{b_i}a_{i+1}$, $a_i\simeq_{b_i}a'$, $a'\neq a_i$ and ties are of size $2$, we either have $a_i<_{b_i}a_{i+1}$ or $a'= a_{i+1}$. If we have $a_i<_{b_i}a_{i+1}$, then \eqref{case2_lem:interior_state} holds. If $a'= a_{i+1}$ then at some point $b_i$ rejected a proposal of $a_i$ while at the end of the algorithm $b_i$ holds two proposals of $a_{i+1}$. If $a_{i+1}$ is basic, this contradicts the special case for the rejection step, because $a_i\simeq_{b_i}a_{i+1}$. Then $a_{i+1}$ is not basic, implying that $b_i$ points right and thus \eqref{case2_lem:interior_state}.
 
In Case~\eqref{case11_lem:interior}, \(\cost(a_i) = 2\) and \(\cost(b_i) = 2\), so \(\cost(\{a_i, b_i\}) =4\), implying~\eqref{case1_lem:interior_state}.

In Case~\eqref{case13_lem:interior}, we have $\cost(b_i)\geq 1$, since $b_i$ is matched by $M$ and hence $\deg(b_i) \geq 1$. We also have $\cost(a_i)=\deg(a_i)+1=3$, since $a_i$ has a bad output.  Thus, $\cost(\{a_i,b_i\})=\cost(a_i)+\cost(b_i)\geq 4$, showing \eqref{case1_lem:interior_state}.

In Case~\eqref{case14_lem:interior}, let $(a_i,b')$ be the good output from $a_i$. Since $b_i$ did not reject any proposals from $a_i$ during the algorithm, we have that $a_i$ is basic and also that $b_i\leq_{a_i}b_{i-1}$ and $b_i\leq_{a_i}b'$ hold. Since $a_i$ is basic and $(a_i,b')$ is a good output from $a_i$, we have that $b'\neq b_i$ and $b_i\geq_{a_i}b'$, and thus $b_i\simeq_{a_i}b'$. Since $b'\neq b_i$, $b_i\simeq_{a_i}b'$, $b_i\leq_{a_i}b_{i-1}$, and ties are of size two, we have $b_i<_{a_i}b_{i-1}$, implying~\eqref{case4_lem:interior_state}.

\end{proof}

\begin{corollary}\label{cor:induction}
Let $a_0-b_0-a_1-\ldots-a_k-b_k$ be an $M$-augmenting path in $M+ \OPT$ of length $2k+1$, $k\geq 2$, where $a_0\in A$. Then, for every $i=1,\ldots, k-1$  such that  \(\cost(\{a_i, b_i\}) = 3\), we have that if \(b_{i-1}\) rejected a proposal from \(a_{i-1}\) during the algorithm and \(b_{i-1}\) points right, then \(b_i\) rejected a proposal from \(a_i\) during the algorithm and \(b_i\) points right. 
\end{corollary}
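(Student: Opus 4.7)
The plan is to invoke Lemma~\ref{lem:interior} at index $i$ and argue that three of its four possible conclusions are excluded by the corollary's hypotheses, leaving only conclusion~2, which is exactly what we want to prove. Conclusion~1, namely $\cost(\{a_i,b_i\})\geq 4$, is directly incompatible with the hypothesis $\cost(\{a_i,b_i\})=3$. Conclusion~3, which says that $b_{i-1}$ did not reject any proposals from $a_{i-1}$, contradicts the assumption that $b_{i-1}$ rejected a proposal from $a_{i-1}$.

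The main step is to rule out conclusion~4, so assume for contradiction that $a_i$ is basic at the end of the algorithm and $b_{i-1}>_{a_i}b_i$. I would then unpack the definition of ``$b_{i-1}$ points right'' used in the hypothesis: its second clause demands that $a_i$ is not basic at the end, which fails here. Hence only its first clause can apply, yielding the strict preference $a_i>_{b_{i-1}}a_{i-1}$.

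Next I would observe that $(a_i,b_{i-1})$ is an edge of $G$, since it belongs to $M\subseteq G$. On $a_i$'s side, conclusion~4 gives $b_{i-1}>_{a_i}b_i=\OPT(a_i)$, and on $b_{i-1}$'s side the previous paragraph gives $a_i>_{b_{i-1}}a_{i-1}=\OPT(b_{i-1})$. Together these say that $(a_i,b_{i-1})$ is a blocking pair for $\OPT$, contradicting the stability of $\OPT$. So conclusion~4 cannot occur, and Lemma~\ref{lem:interior} then forces conclusion~2, which is precisely the statement that $b_i$ rejected a proposal from $a_i$ and $b_i$ points right.

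The only real obstacle is spotting the blocking-pair argument: once one notices that ``$a_i$ basic'' turns the hypothesis ``$b_{i-1}$ points right'' into the strict inequality $a_i>_{b_{i-1}}a_{i-1}$, the contradiction with stability of $\OPT$ through the edge $(a_i,b_{i-1})\in M$ is immediate, and no further structural analysis of the algorithm is required.
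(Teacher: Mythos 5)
Your proof is correct and follows essentially the same route as the paper: invoke Lemma~\ref{lem:interior}, eliminate conclusions~1 and~3 via the hypotheses, and rule out conclusion~4 by noting that ``$a_i$ basic'' forces the strict preference $a_i >_{b_{i-1}} a_{i-1}$ from the pointing-right hypothesis, making $(a_i,b_{i-1})$ a blocking pair for $\OPT$. No gaps.
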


\begin{proof}
By Lemma~\ref{lem:interior}, we have that at least one of the following statements is true:
\begin{enumerate}[label*=\roman*.]
	\item \label{case1_cor:induction}  \(\cost(\{a_i, b_i\}) \geq 4\).
	\item \label{case2_cor:induction}  \(b_i\) rejected a proposal from \(a_i\) during the algorithm and $b_i$ points right. 
	\item \label{case3_cor:induction}  \(b_{i-1}\) did not reject any proposal from \(a_{i-1}\) during the algorithm.
	\item \label{case4_cor:induction}  at the end of the algorithm \(a_i\) is basic and \(b_{i-1} >_{a_i} b_i\). 
\end{enumerate}

In Case~\eqref{case1_cor:induction}, we have a contradiction to \(\cost(\{a_i, b_i\}) =3\). In Case~\eqref{case3_cor:induction}, we have a contradiction to the fact that \(b_{i-1}\) rejected a proposal from \(a_{i-1}\) during the algorithm. Let us consider Case~\eqref{case4_cor:induction}. Since $a_i$ is basic and $b_{i-1}$ points right, we have \(a_i >_{b_{i-1}} a_{i-1}\). Thus, we have \(a_i >_{b_{i-1}} a_{i-1}\) and  \(b_{i-1} >_{a_i} b_i\), showing that  \((a_i, b_{i-1})\) is a blocking pair for \(\OPT\), contradicting the stability of $\OPT$.

In Case~\eqref{case2_cor:induction}, we obtain the desired statement immediately. 

\end{proof}

\begin{lemma}\label{lem:start}
Let $a_0-b_0-a_1-\ldots-a_k-b_k$ be an $M$-augmenting path in $M+ \OPT$ of length $2k+1$, $k\geq 2$, where $a_0\in A$. Then, we have \(\cost(\{a_0, b_0\}) \geq 2\). Moreover,  \(b_0\) rejected a proposal from \(a_0\) during the algorithm and \(b_0\) points right. 
\end{lemma}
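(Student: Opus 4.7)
The proof has three parts, corresponding to the three assertions.

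For the cost bound $\cost(\{a_0,b_0\}) \geq 2$, I will apply Lemma~\ref{lem:3cost} directly to the pair $(a_0,b_0)$, since $\OPT(a_0)=b_0$.

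For the claim that $b_0$ rejected a proposal from $a_0$, I observe that $a_0$ is unmatched in $M$ and $M$ matches every degree-$2$ vertex of $G'$, so $\deg(a_0)\leq 1$ and hence $a_0$ is unsuccessful. Following the reasoning used in Cases~\eqref{case0_lem:3cost} and \eqref{case5_lem:3cost} of Lemma~\ref{lem:3cost}, an unsuccessful man must have become $2$-promoted and stopped, so every woman in $N(a_0)$, in particular $b_0$, rejected a $2$-promoted proposal from $a_0$.

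For the claim that $b_0$ points right, stability of $M$ forces $a_1 = M(b_0) \geq_{b_0} a_0$, for otherwise $(a_0,b_0)$ would be a blocking pair ($a_0$ is unmatched and $b_0 >_{a_0} \varnothing$). If the inequality is strict, $b_0$ points right by the first bullet and we are done. Otherwise $a_1 \simeq_{b_0} a_0$, and it remains to show that $a_1$ is not basic at the end. I will argue this by contradiction: assume $a_1$ is basic throughout. Let $t$ be the moment when $b_0$ rejected a $2$-promoted proposal of $a_0$; examining the superiority definition, each of the two proposals kept by $b_0$ at $t$ is either from a man strictly preferred to $a_0$, from $a_0$ himself, or $2$-promoted from a man tied with $a_0$ (which by the size-two ties assumption must be $a_1$, ruled out by our assumption). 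So no basic $a_1$'s proposal is held at $b_0$ at time $t$, and any basic $a_1$'s proposal held at the end must have arrived after $t$. I then case-split on what the other proposal held at $b_0$ at the end can be—either a parallel $a_1$ proposal, or $a_0$'s other $2$-promoted proposal, or an input from some $a''\notin\{a_0,a_1\}$—and in each sub-case derive a contradiction along the lines of Case~\eqref{case102_lem:interior} of Lemma~\ref{lem:interior}, appealing to the special rejection rule (triggered when three proposals at $b_0$ are tied in preference and share promotion status) together with Remark~\ref{rem:bouncing_step}.

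The main obstacle is the last case analysis after time $t$: bounce and forward operations can shuffle higher-desirability proposals out of $b_0$'s held set, so a naive ``worst held proposal improves monotonically'' argument does not work. Ruling out that a basic $a_1$'s proposal survives at $b_0$ requires using the special rejection rule at a suitable moment (where the three proposals are all basic and tied) and combining it with Remark~\ref{rem:bouncing_step} applied to the other held proposal at $b_0$.
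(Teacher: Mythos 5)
Your first two claims are fine. Invoking Lemma~\ref{lem:3cost} for the bound $\cost(\{a_0,b_0\})\geq 2$ is a legitimate shortcut (the paper instead re-derives it by showing that $b_0$ has no good input, so $\cost(b_0)=\deg(b_0)=2$), and your argument that $a_0$, being unmatched by $M$ and hence unsuccessful, must have been rejected by every woman in $N(a_0)$ --- in particular by $b_0$ --- as a $2$-promoted man is exactly the paper's.

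The gap is in the ``points right'' claim. What you give there is a plan, not a proof: you reduce to the tie case $a_1\simeq_{b_0}a_0$ with $a_1$ basic, announce a case split on the other proposal held by $b_0$ at the end, and then explicitly concede that you cannot close it because bounce and forward steps may remove desirable proposals from $b_0$'s held set. The paper closes this in one line by the same monotonicity property of the Huang--Kavitha algorithm that it uses throughout without re-proving it (for instance in Cases i and v of the proof of Lemma~\ref{lem:3cost}, in Case ii of the proof of Lemma~\ref{lem:double_bad}, and repeatedly in Lemma~\ref{lem:interior}): if a woman rejected a proposal from a man in a given status, then every proposal she holds at the end of the algorithm is weakly superior to that rejected proposal. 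Applied at the highest status level, $b_0$ rejected a $2$-promoted proposal from $a_0$, so the proposal from $a_1$ she holds at the end cannot be inferior to a $2$-promoted proposal from $a_0$; by the definition of superiority this forces either $a_1>_{b_0}a_0$, or $a_1\simeq_{b_0}a_0$ with $a_1$ $2$-promoted and hence not basic, and in either case $b_0$ points right. (Note that with this fact your appeal to the stability of $M$ becomes unnecessary.) Your worry about non-monotonicity under bounces and forwards concerns an invariant of the algorithm itself, one that the paper --- and indeed the stability of the output matching --- already relies on; attempting to re-derive it ad hoc for this one lemma is both unnecessary and, as written, incomplete.
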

\begin{proof}
Clearly, \(a_0\) is not matched by \(M\) and so \(a_0\) is unsuccessful. Thus, \(b_0\)  rejected a proposal from \(a_0\) as a $2$-promoted man. Hence, $b_0$ is popular and if $(a,b_0)$ is an edge in $G'$ then $a\geq_{b_0} a_0=\OPT(b_0)$. This demonstrates that \(b_0\) has no good input, and so \(\cost(\{a_0, b_0\}) \geq \cost(b_0) = 2\).

Now, at some point during the algorithm $b_0$ rejected a proposal from $a_0$ as a $2$-promoted man, but at the end of the algorithm $b_0$ has a proposal from $a_1$, showing that $b_0$ points right and finishing the proof.
\end{proof}

\begin{remark}\label{rem:lastfour}
Let $a_0-b_0-a_1-\ldots-a_k-b_k$ be an $M$-augmenting path in $M+ \OPT$ of length $2k+1$, $k\geq 2$, where $a_0\in A$. Then, we have \(\cost(\{a_k,b_k\}) \geq 3\).
\end{remark}

\begin{proof}
Clearly, $b_k$ is not matched by $M$ and so $b_k$ is unsuccessful. Hence, $a_k$ is basic and $a_k$ is successful. 

If at the end of the algorithm $b_k$ holds a proposal from $a_k$, then $b_k$ has no input. Thus, $\cost(b_k)=\deg(b_k)=1$ and $\cost(a_k)=\deg(a_k)=2$, showing that $\cost(\{a_k,b_k\})\geq 3$.

If at the end of the algorithm $b_k$ holds no proposal from $a_k$, then $a_k$  has an output $(a_k,b')$ with $b'\geq_{a_k} b_k$. Thus, $(a_k,b')$ is a bad output, implying that $\cost(a_k)=\deg(a_k)+1=3$.
\end{proof}

\begin{lemma}\label{lem:end}
Let $a_0-b_0-a_1-\ldots-a_k-b_k$ be an $M$-augmenting path in $M+ \OPT$ of length $2k+1$, $k\geq 2$, where $a_0\in A$.
If \(b_{k-2}\) rejected a proposal from  \(a_{k-2}\) during the algorithm and  \(b_{k-2}\)  points right, then \(\cost(\{a_{k-1}, b_{k-1}, a_k, b_k\}) \geq 7\).
\end{lemma}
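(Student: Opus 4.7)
The plan is to prove $\cost(\{a_{k-1},b_{k-1}\}) \geq 4$; combined with $\cost(\{a_k,b_k\}) \geq 3$ from Remark~\ref{rem:lastfour}, this yields the target bound of $7$. I will argue by contradiction. Since $(a_{k-1},b_{k-2}) \in M$ gives $\deg(a_{k-1}) \geq 1$, Lemma~\ref{lem:3cost} guarantees $\cost(\{a_{k-1},b_{k-1}\}) \geq 3$; assume equality holds.

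Then Corollary~\ref{cor:induction} applied at $i = k-1$, using the hypothesis on $b_{k-2}$, implies that $b_{k-1}$ rejected a proposal from $a_{k-1}$ and that $b_{k-1}$ points right. Since the proof of Remark~\ref{rem:lastfour} shows $a_k$ is basic, the tied clause of ``points right'' is excluded, leaving $a_k >_{b_{k-1}} a_{k-1}$ strictly. Stability of $\OPT$ on the non-edge $(a_k,b_{k-1})$ then forces $b_k \geq_{a_k} b_{k-1}$, and the tied case is killed by Remark~\ref{rem:bouncing_step} applied with $a = a_k$, $b = b_{k-1}$ (which holds a proposal from $a_k$ since $(a_k,b_{k-1}) \in M$), and $b' = b_k$ (unsuccessful as an endpoint of the augmenting path): it would force $b_{k-1}$ to be unpopular, contradicting the fact that $b_{k-1}$ rejected $a_{k-1}$. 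Hence $b_k >_{a_k} b_{k-1}$ strictly.

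The finishing move is a tier-monotonicity argument for $a_k$'s proposals. Because $b_k$ is unsuccessful and any popular woman ends the algorithm holding two proposals (hence is successful), $b_k$ must be unpopular; in particular $b_k$ never rejected $a_k$, and since $a_k$ remains basic throughout, $b_k$ never enters $a_k$'s rejection history. I would then trace each proposal of $a_k$ through the algorithm: it starts at $a_k$'s first woman, each rejection redirects it to a most preferred non-rejected woman (which is weakly preferred to $b_k$, since $b_k$ is never rejected), and bounce and forward steps only move it to a woman tied with the current holder from the sender's perspective. Therefore, at every moment each proposal of $a_k$ rests at a woman weakly preferred to $b_k$ by $a_k$. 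But $(a_k,b_{k-1}) \in M$ places a proposal of $a_k$ at $b_{k-1}$ with $b_{k-1} <_{a_k} b_k$, a contradiction.

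The main obstacle is to carry out the tier-monotonicity step cleanly: I must explicitly invoke the ``tied from the sender's perspective'' clauses in the bounce and forward definitions to argue that those steps preserve the preference level, and confirm that the post-rejection target is always weakly preferred to $b_k$ using $b_k$'s absence from the rejection history throughout.
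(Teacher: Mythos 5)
Your proof is correct, but it follows a genuinely different route from the paper's. The paper proves Lemma~\ref{lem:end} by a direct case analysis around $b_{k-1}$ (whether $b_{k-1}$ ever rejected a proposal from $a_{k-1}$, and if so, which man supplies her second held proposal at the end), showing in each surviving configuration that $\cost(\{a_{k-1},b_{k-1}\})\geq 4$ and eliminating the remaining configurations via Remark~\ref{rem:bouncing_step}. You instead assume $\cost(\{a_{k-1},b_{k-1}\})=3$, invoke Corollary~\ref{cor:induction} at $i=k-1$ to propagate ``rejected and points right'' to $b_{k-1}$, and then show this is incompatible with the tail of the path: since $a_k$ is basic, pointing right forces $a_k>_{b_{k-1}}a_{k-1}$; stability of $\OPT$ forces $b_k\geq_{a_k}b_{k-1}$; Remark~\ref{rem:bouncing_step} kills the tie; and the monotonicity of a basic, never-rejected-by-$b_k$ man's proposals yields the final contradiction. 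That monotonicity fact is exactly what the paper asserts without detailed justification in two of its subcases (``$b_k$ is unsuccessful and there is an edge $(a_k,b_{k-1})$ in $G'$, hence $a_k$ is basic and $b_{k-1}\geq_{a_k}b_k$''), so your level of rigor matches the paper's provided you carry out the induction over rejection, bounce and forward moves as you outline; all three moves do preserve the invariant (a rejected proposal returns to a most preferred woman among a set still containing $b_k$, and bounces/forwards move a proposal only to a woman tied with its current holder in $a_k$'s list). What your route buys is economy: Corollary~\ref{cor:induction} already packages the case analysis of Lemma~\ref{lem:interior}, so you need not redo it, and you in fact prove the slightly stronger statement that under the lemma's hypotheses $b_{k-1}$ cannot both have rejected $a_{k-1}$ and point right. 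What the paper's route buys is independence from Corollary~\ref{cor:induction} and an explicit identification, in each configuration, of where the extra unit of cost at $\{a_{k-1},b_{k-1}\}$ comes from.
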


\begin{proof}
For the proof of the lemma  it is enough to consider following cases
\begin{enumerate}[label*=\roman*.]
	\item \label{case1_lem:end} $b_{k-1}$ did not reject any proposals from $a_{k-1}$ during the algorithm.
	\item \label{case2_lem:end} $b_{k-1}$ rejected a proposal from $a_{k-1}$ during the algorithm.
		\begin{enumerate}[label*=\roman*.]
			\item \label{case21_lem:end} there are two parallel edges \((a_k, b_{k-1})\) in $G'$.
			\item \label{case22_lem:end} there is an edge \((a_{k-1}, b_{k-1})\) in $G'$.
			\item \label{case23_lem:end} there is an edge $(a',b_{k-1})$ in $G'$ such that $a'\neq a_{k-1}$ and $a'\neq a_k$.
		\end{enumerate}
\end{enumerate}

In Case~\eqref{case1_lem:end}, since $b_{k-1}$ did not reject any proposal from $a_{k-1}$ during the algorithm, we have that $a_{k-1}$ is basic and $M(a_{k-1})=b_{k-2}\geq_{a_{k-1}} b_{k-1}$. Since $a_{k-1}$ is basic and $b_{k-2}$ points right, we have $a_{k-1}>_{b_{k-2}} a_{k-2}$. 

Moreover, we have $b_{k-2}\leq_{a_{k-1}} b_{k-1}$, since otherwise $(a_{k-1},b_{k-2})$ is a blocking pair for $\OPT$, contradicting the stability of $\OPT$. Because $b_{k-2}\geq_{a_{k-1}} b_{k-1}$ and $b_{k-2}\leq_{a_{k-1}} b_{k-1}$ hold, we have $b_{k-2}\simeq_{a_{k-1}} b_{k-1}$.

The forwarding step excludes the possibility of $b_{k-2}$ having two proposals from $a_{k-1}$ at the end of the algorithm while also having rejected a proposal at some point during the algorithm. On the other hand, we have that $a_{k-1}$ is basic and by the statement of the lemma $b_{k-2}$ rejected a proposal from $a_{k-2}$ during the algorithm, showing that there exists an edge $(a_{k-1},b')$ in $G'$ such that $b'\neq b_{k-2}$. Since $b_{k-1}$ did not reject any proposal from $a_{k-1}$ during the algorithm, we have $b'\geq_{a_{k-1}} b_{k-1}$. 

Now, because $b'\geq_{a_{k-1}} b_{k-1}$, $b_{k-2}\simeq_{a_{k-1}} b_{k-1}$, $b'\neq b_{k-2}$, and ties are of size two, we have that either $b'=b_{k-1}$ or $b'>_{a_{k-1}} b_{k-1}$. If $b'=b_{k-1}$, we have $\cost(a_{k-1})=\deg(a_{k-1})=2$ and $\cost(b_{k-1})=\deg(b_{k-1})=2$, implying the desired inequality
\[
\cost(\{a_{k-1}, b_{k-1}, a_k, b_k\})\geq \underbrace{\cost(a_{k-1})}_{= 2}+\underbrace{\cost(b_{k-1})}_{=2}+\underbrace{\cost(\{a_k,b_k\})}_{\substack{\geq 3\\\text{by Remark~\ref{rem:lastfour}}}}\geq7\,.
\]
If $b'>_{a_{k-1}} b_{k-1}$, then $(a_{k-1},b')$ is a bad output from $a_{k-1}$, because $a_{k-1}$ is basic. Hence, $\cost(a_{k-1})=\deg(a_{k-1})+1=3$ and $\cost(b_{k-1})\geq 1$, since $b_{k-1}$ is matched by $M$. This also gives us the desired inequality
\[
\cost(\{a_{k-1}, b_{k-1}, a_k, b_k\})\geq \underbrace{\cost(a_{k-1})}_{= 3}+\underbrace{\cost(b_{k-1})}_{\geq 1}+\underbrace{\cost(\{a_k,b_k\})}_{\substack{\geq 3\\\text{by Remark~\ref{rem:lastfour}}}}\geq 7\,. 
\]

In Case~\eqref{case21_lem:end}, since at the end of the algorithm $b_{k-1}$ holds two proposals from $a_k$, who is basic, and at some point of the algorithm $b_{k-1}$ rejected a proposal from $a_{k-1}$, we have that $a_k\geq _{b_{k-1}} a_{k-1}$. The special case of the rejection step excludes the possibility that $a_k\simeq _{b_{k-1}} a_{k-1}$, implying that $a_k> _{b_{k-1}} a_{k-1}$.

However, $b_k$ is unsuccessful and there is an edge $(a_k,b_{k-1})$ in $G'$, hence $a_k$ is basic and $b_{k-1}\geq_{a_k} b_k$. If $b_{k-1}>_{a_k} b_k$, then $(a_k, b_{k-1})$ is a blocking pair for $\OPT$, contradicting the stability of $\OPT$. Thus, we have $b_{k-1}\simeq_{a_k} b_k$. But by Remark~\ref{rem:bouncing_step}, $b_{k-1}\simeq_{a_k} b_k$ and $\deg(b_k)=1$ together with the bouncing step excludes the possibility that $b_{k-1}$ rejects any proposal during the algorithm and also holds two proposals from $a_k$ at the end, contradiction.

In Case~\eqref{case22_lem:end}, we have $\cost(a_{k-1})=\deg(a_{k-1})=2$ and $\cost(b_{k-1})=\deg(b_{k-1})=2$, implying the desired inequality
\[
\cost(\{a_{k-1}, b_{k-1}, a_k, b_k\})\geq \underbrace{\cost(a_{k-1})}_{= 2}+\underbrace{\cost(b_{k-1})}_{=2}+\underbrace{\cost(\{a_k,b_k\})}_{\substack{\geq3\\\text{by Remark~\ref{rem:lastfour}}}}\geq7\,.
\]

In Case~\eqref{case23_lem:end}, since at the end of the algorithm $b_{k-1}$ has proposals from $a'$ and $a_{k}$ and at some point of the algorithm $b_{k-1}$  rejected a proposal from $a_{k-1}$, we have  $a_k \geq_{b_{k-1}} a_{k-1}$ and $a' \geq_{b_{k-1}} a_{k-1}$. Since $a'\neq a_{k-1}$ and $a'\neq a_k$ and ties are of size two, we have either $a_k >_{b_{k-1}} a_{k-1}$ or $a' >_{b_{k-1}} a_{k-1}$. 

If $a_k >_{b_{k-1}} a_{k-1}$, then we have $b_{k-1}\leq_{a_k} b_k$, since otherwise $(a_k, b_{k-1})$ is a blocking pair for $\OPT$, contradicting the stability of $\OPT$. On the other side, $b_k$ is unsuccessful and an edge $(a_k, b_{k-1})$ exists in $G'$, so we have $b_{k-1}\geq_{a_k} b_k$. Thus, we have $b_{k-1}\simeq_{a_k} b_k$. However, by Remark~\ref{rem:bouncing_step}, $b_{k-1}\simeq_{a_k} b_k$ and $\deg(b_k)=1$ together with the bouncing step exclude the possibility that $b_{k-1}$ rejects any proposal during the algorithm but at the end of the algorithm has a proposal from $a_k$, contradiction.

Let us now consider the case when $a' >_{b_{k-1}} a_{k-1}$ and $a_k \simeq_{b_{k-1}} a_{k-1}$. Then $(a', b_{k-1})$ is a bad input to $b_{k-1}$, because $a'\neq a_{k-1}$, $a'\neq a_k$, $a' >_{b_{k-1}} a_{k-1}$ and $b_{k-1}$ rejected a proposal during the algorithm. Hence, we have $\cost(b_{k-1})=\deg(b_{k-1})=2$. Also due to the fact that $a_k \simeq_{b_{k-1}} a_{k-1}$ and the fact that at the end of the algorithm $b_{k-1}$ has a proposal from $a_k$ as a basic man, we have that $a_{k-1}$ is successful. Since $a_{k-1}$ is successful, we have $\cost(a_{k-1})\geq \deg(a_{k-1})=2$, implying the desired inequality
\[
\cost(\{a_{k-1}, b_{k-1}, a_k, b_k\})\geq \underbrace{\cost(a_{k-1})}_{\geq 2}+\underbrace{\cost(b_{k-1})}_{=2}+\underbrace{\cost(\{a_k,b_k\})}_{\substack{\geq3\\\text{by Remark~\ref{rem:lastfour}}}}\geq7\,.
\]
\end{proof}

\begin{lemma}\label{lem:aug}
Let \(C\) be a connected component of $M+\OPT$ which is an $M$-augmenting path of length at least $5$. Then, we have \(\cost (C) \geq 3 |\OPT \cap C|\). 
\end{lemma}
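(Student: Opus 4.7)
The plan is to decompose $C$ node-pair by node-pair along the path $a_0-b_0-a_1-\ldots-a_k-b_k$ and account for costs locally. Since $C$ is an $M$-augmenting path of length $2k+1$ with $k\geq 2$ (by Lemma~\ref{lemma:three_path_original_paper}, lengths $1$ and $3$ are excluded), we have $|\OPT\cap C|=k+1$, and the target is $\cost(C)\geq 3(k+1)$.

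The baseline bounds are: $\cost(\{a_0,b_0\})\geq 2$ by Lemma~\ref{lem:start}; $\cost(\{a_i,b_i\})\geq 3$ for every $i\in\{1,\ldots,k-1\}$ by Lemma~\ref{lem:3cost} (since each such $a_i$ is matched by $M$, so $\deg(a_i)\geq 1$); and $\cost(\{a_k,b_k\})\geq 3$ by Remark~\ref{rem:lastfour}. Summed naively these give only $2+3k$, which is one short of $3(k+1)$. So the whole work is to locate one additional unit of cost.

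To find that unit I would split into two cases depending on whether the inductive propagation provided by Corollary~\ref{cor:induction} (which needs the ``rejected and points right'' hypothesis, exactly the conclusion of Lemma~\ref{lem:start} at $i=0$) survives all the way to $b_{k-2}$. Case A: there exists some $i\in\{1,\ldots,k-2\}$ with $\cost(\{a_i,b_i\})\geq 4$. Then replacing the bound $3$ with $4$ at index $i$ immediately yields $\cost(C)\geq 2+3(k-1)+1+3=3(k+1)$. Case B: $\cost(\{a_i,b_i\})=3$ for every $i\in\{1,\ldots,k-2\}$. Starting from the conclusion of Lemma~\ref{lem:start} ($b_0$ rejected a proposal from $a_0$ and points right), I would apply Corollary~\ref{cor:induction} for $i=1,2,\ldots,k-2$ to conclude that $b_{k-2}$ rejected a proposal from $a_{k-2}$ and points right. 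Then Lemma~\ref{lem:end} delivers $\cost(\{a_{k-1},b_{k-1},a_k,b_k\})\geq 7$, giving $\cost(C)\geq 2+3(k-2)+7=3(k+1)$.

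The main subtlety, and the only part that needs attention, is the boundary $k=2$: there the iteration in Case B is vacuous and one must observe that $b_{k-2}=b_0$, so the ``rejected and points right'' property follows directly from Lemma~\ref{lem:start} without any use of Corollary~\ref{cor:induction}; after that Lemma~\ref{lem:end} closes the case identically. Similarly, Case A is vacuous when $k=2$ (since $\{1,\ldots,k-2\}=\emptyset$), so no contradiction arises from the two cases not overlapping — Case B simply covers this endpoint alone. Everything else is bookkeeping.
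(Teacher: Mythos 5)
Your proof is correct and follows essentially the same route as the paper: the baseline bounds from Lemma~\ref{lem:start}, Lemma~\ref{lem:3cost} and Remark~\ref{rem:lastfour} fall one unit short of $3|\OPT\cap C|$, and the missing unit is recovered either from an interior pair of cost at least $4$ or, when every interior pair up to index $k-2$ has cost exactly $3$, by propagating the ``rejected and points right'' property via Corollary~\ref{cor:induction} and then invoking Lemma~\ref{lem:end}. Your explicit treatment of the boundary case $k=2$, where the propagation is vacuous and Lemma~\ref{lem:start} alone supplies the hypothesis of Lemma~\ref{lem:end}, is a welcome clarification of a step the paper leaves implicit.
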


\begin{proof}
Let $C$ be of the form $a_0-b_0-a_1-\ldots-a_k-b_k$ be an $M$-augmenting path in $M+ \OPT$ of length $2k+1$, $k\geq 2$, where $a_0\in A$. Then, we have

\begin{align*}
\cost(C)=\underbrace{\cost(\{a_0,b_0\})}_{\substack{\ge 2\\\text{by Lemma~\ref{lem:start}}}}+\sum_{i=1}^{k-1} \underbrace{\cost(\{a_i,b_i\})}_{\substack{\ge 3\\\text{by Lemma~\ref{lem:3cost}}}}+ \underbrace{\cost(\{a_k,b_k\})}_{\substack{\ge 3\\\text{by Remark~\ref{rem:lastfour}}}}\geq
2+3(k-2)+3\\=3k-1=3|\OPT|-1\,.
\end{align*}

By Lemma~\ref{lem:start} and Corollary~\ref{cor:induction}, the above inequality is tight only if for each $i=0,\ldots,k-2$ we have that \(b_i\) rejected a proposal from \(a_i\) during the algorithm and $b_i$ points right.  However, in that case, we have $\cost(\{a_{k-1}, b_{k-1}, a_k, a_k\})\geq 7$ by Lemma~\ref{lem:end}, implying the desired inequality
\begin{align*}
\cost(C)=\underbrace{\cost(\{a_0,b_0\})}_{\substack{\ge 2\\\text{by Lemma~\ref{lem:start}}}}+\sum_{i=1}^{k-2} \underbrace{\cost(\{a_i,b_i\})}_{\substack{\ge 3\\\text{by Lemma~\ref{lem:3cost}}}}+ \underbrace{\cost(\{a_{k-1}, b_{k-1}, a_k, a_k\})}_{\ge7}\geq\\
2+3(k-3)+7=3k=3|\OPT|\,
\end{align*}
and finishing the proof.

\end{proof}

Our main theorem directly follows from Remark~\ref{rem:trivial_component}, Corollary~\ref{cor:nonaug}, Lemma~\ref{lem:low_cost}, Lemma~\ref{lemma:three_path_original_paper} and Lemma~\ref{lem:aug}.

\begin{theorem}\label{thm:main}
\(|\OPT| / |M| \leq 4/3\).
\end{theorem}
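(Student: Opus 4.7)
The plan is to argue by contradiction using Lemma~\ref{lem:low_cost}. Suppose toward a contradiction that $|\OPT|/|M| > 4/3$. Then Lemma~\ref{lem:low_cost} produces a connected component $C$ of $\OPT+M$ with $\cost(C) < 3|\OPT\cap C|$, so my task reduces to establishing the reverse inequality $\cost(C) \geq 3|\OPT\cap C|$ for \emph{every} connected component $C$ of $\OPT+M$ and then summing.

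To do this, I would first enumerate the possible shapes of $C$. Because both $\OPT$ and $M$ are matchings, every vertex has degree at most $2$ in $\OPT+M$, so the connected components fall into a small list: isolated nodes, alternating cycles (including the degenerate two-edge cycle formed when an edge of $G'$ appears in both $\OPT$ and $M$ as a parallel pair), alternating paths, $\OPT$-augmenting paths, and $M$-augmenting paths. The previously proved results dispatch each of these cases in turn: Remark~\ref{rem:trivial_component} handles isolated nodes; Corollary~\ref{cor:nonaug} together with the short remark following it handles alternating paths, alternating cycles, and $\OPT$-augmenting paths (including the subtle edge case of a shared edge that does not appear as a parallel pair in $G'$); Lemma~\ref{lemma:three_path_original_paper} rules out any $M$-augmenting path of length $1$ or $3$; and Lemma~\ref{lem:aug} handles every remaining $M$-augmenting path, which must therefore have length at least $5$.

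Adding the per-component inequalities yields $\sum_{C}\cost(C)\geq 3|\OPT|$, directly contradicting the existence of the low-cost component produced by Lemma~\ref{lem:low_cost}. Since all of the technically demanding case analysis has already been packaged into the cited lemmas, I do not anticipate a real obstacle at this stage; the only thing to check is that the enumeration above exhausts the possible shapes of a component of $\OPT+M$, and this is immediate from the degree bound on a union of two matchings. The final theorem is thus essentially an assembly step that glues together the previously established building blocks.
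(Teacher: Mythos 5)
Your proof is correct and follows exactly the route the paper takes: invoke Lemma~\ref{lem:low_cost} for the contradiction, then cover every component shape via Remark~\ref{rem:trivial_component}, Corollary~\ref{cor:nonaug} (with the subsequent remark for shared non-parallel edges), Lemma~\ref{lemma:three_path_original_paper}, and Lemma~\ref{lem:aug}. Nothing to add.
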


The next example demonstrates that the bound in Theorem~\ref{thm:main} is tight. In the description of this example we closely follow the description style from~\cite{Bauckholt}.

\begin{example}
In Figure~\ref{figure: tight_instance}, the circle nodes represent men and the square nodes represent women. The solid lines represent the edges in $G'$. The arrow tips indicate the preferences of each person. For example, $b_0$ has double-tipped arrows pointing to $a_1$ and $a_3$ and a single-tipped arrow pointing to $a_0$, so $b_0$ is indifferent between $a_1$ and $a_3$ and prefers both to $a_0$.

It is straightforward to verify that there exists a unique maximum cardinality stable matching, namely $\OPT=\{(a_0,b_0), (a_1,b_1), (a_2,b_2), (a_3,b_3)\}$. We can prove that there exists an execution of the algorithm with the above intermediate graph $G'$. Hence, for this execution one of the possible outputs is the matching $M=\{(a_1,b_0), (a_2, b_1),(a_3, b_3)\}$, leading to the ratio $|\OPT| / |M| = 4/3$.

\begin{figure}
\begin{center}
\begin{tikzpicture}[scale=1.7]
\tikzstyle{every node}=[rectangle, fill = black!30!white, draw = black, minimum size = 8pt, line width = 0.7pt]
\tikzstyle{edge}=[> = angle 90, shorten >= 2pt, shorten < = 2pt, line width = 0.7pt]
\tikzstyle{man}=[circle]

\node[label=below:$a_0$, man] (a0) at (0, 0) {};
\node[label=below:$a_1$, man] (a1) at (1, 0) {};
\node[label=below:$a_2$, man] (a2) at (2, 0) {};
\node[label=below:$a_3$,  man] (a3) at (3, 0) {};

\node[label=$b_0$] (b0) at (0, 1) {};
\node[label=$b_1$] (b1) at (1, 1) {};
\node[label=$b_2$] (b2) at (2, 1) {};
\node[label=$b_3$] (b3) at (3, 1) {};

\draw[edge,    dashed,    >-<] (b0) -- (a0);
\draw[edge,        >>-<] (b0) -- (a1);
\draw[edge,        >-<] (b1) -- (a1);
\draw[edge,        >-<<] (b1) -- (a2);
\draw[edge,   dashed,      >-<] (b2) -- (a2);
\draw[edge,        >-<<] (b3) -- (a2);
\draw[edge,        >-<] (b3) -- (a3);
\draw [edge,       >>-<] (b0) to [out=225, in=90] (-.5,0) to [out=270, in=225]  (a3);
\end{tikzpicture}
\end{center}
\caption{An instance for which the algorithm in~\cite{HuangKavitha2015} outputs a stable matching $M$ with $|\OPT| / |M| = 4/3$. }
\label{figure: tight_instance}
\end{figure}
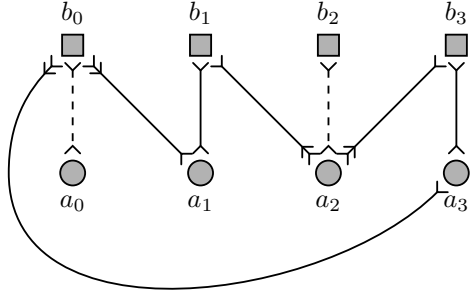
\end{example}
\begin{proof}
Let us provide an execution of the algorithm leading to the above intermediate graph $G'$.
\begin{itemize}
\item $a_3$ proposes to $b_3$; $b_3$ accepts.
\item $a_3$ proposes to $b_0$; $b_0$ accepts.
\item $a_2$ proposes to $b_3$; $b_3$ accepts.
\item $a_2$ proposes to $b_1$; $b_1$ accepts.
\item $a_1$ proposes to $b_1$; $b_1$ accepts.
\item $a_1$ proposes to $b_0$; $b_0$ accepts.
\item $a_0$ starts to propose to $b_0$, but each time $a_0$ makes a proposal, the proposal is rejected. $a_0$ gives up.
\end{itemize}
\end{proof}

\section*{Acknowledgements}

We would like to thank Laura Sanit\`{a} for suggesting us the maximum cardinality stable matching problem with ties of size two.

\bibliographystyle{plain}

\newpage
\bibliography{bib}

\begin{thebibliography}{10}

\bibitem{Bauckholt}
F.~{Bauckholt}, K.~{Pashkovich}, and L.~{Sanit{\`a}}.
\newblock {On the approximability of the stable marriage problem with one-sided
  ties}.
\newblock {\em ArXiv e-prints}, May 2018.

\bibitem{GaleShapley62}
David Gale and L.S. Shapley.
\newblock College admissions and the stability of marriage.
\newblock {\em The American Mathematical Monthly}, 69(1):9--15, 1962.

\bibitem{GaleSotomayor85}
David Gale and Marilda Sotomayor.
\newblock Some remarks on the stable matching problem.
\newblock {\em Discrete Applied Mathematics}, 11(3):223--232, 1985.

\bibitem{HuangKavitha2015}
Chien-Chung Huang and Telikepalli Kavitha.
\newblock Improved approximation algorithms for two variants of the stable
  marriage problem with ties.
\newblock {\em Mathematical Programming}, pages 353--380, 2015.

\bibitem{Iwamaetal07}
K.~Iwama, S.~Miyazaki, and N.~Yamauchi.
\newblock A 1.875-approximation algorithm for the stable marriage problem.
\newblock In {\em 18th Symposium on Discrete Algorithms (SODA)}, pages
  288--287, 2007.

\bibitem{Iwamaetal14}
K.~Iwama, S.~Miyazaki, and H.~Yanagisawa.
\newblock A 25/17-approximation algorithm for the stable marriage problem with
  one-sided ties.
\newblock {\em Algorithmica}, pages 758--775, 2014.

\bibitem{Kiraly11}
Zolt{\'a}n Kir{\'a}ly.
\newblock Better and simpler approximation algorithms for the stable marriage
  problem.
\newblock {\em Algorithmica}, 60(1):3--20, 2011.

\bibitem{algorithms}
Zolt{\'a}n Kir{\'a}ly.
\newblock Linear time local approximation algorithm for maximum stable
  marriage.
\newblock {\em Algorithms}, 6(3):471--484, 2013.

\bibitem{LP18}
Chi-Kit Lam and C.~Gregory Plaxton.
\newblock A ($\ln 4$)-approximation algorithm for maximum stable matching with
  one-sided ties and incomplete lists.
\newblock {\em Manuscript}, 2018.

\bibitem{Manlove}
David~F. Manlove, Robert~W. Irving, Kazuo Iwama, Shuichi Miyazaki, and Yasufumi
  Morita.
\newblock Hard variants of stable marriage.
\newblock {\em Theor. Comput. Sci.}, 276(1-2):261--279, April 2002.

\bibitem{mcdermid_first_approx}
Eric McDermid.
\newblock A 3/2-approximation algorithm for general stable marriage.
\newblock {\em Automata, Languages and Programming}, pages 689--700, 2009.

\bibitem{paluch_original}
Katarzyna Paluch.
\newblock Faster and simpler approximation of stable matchings.
\newblock {\em Algorithms}, 7(2):189--202, 2014.

\bibitem{Yanagisawa}
Hiroki Yanagisawa.
\newblock {\em Approximation Algorithms for Stable Marriage Problems}.
\newblock PhD thesis, Kyoto University, 2007.

\end{thebibliography}

\section*{Appendix}

\subsection*{Comparison with Related Works}

The original paper~\cite{HuangKavitha2015} by Huang and Kavitha contained two algorithms: one for the problem with one-sided ties and one for the problem with ties of size two. Both algorithms consist of a series of proposals; and an output matching is constructed from the proposals accepted at the end of the algorithm.   Recently, Bauckholt, Pashkovich and  Sanit\`{a}~\cite{Bauckholt}  provided a tight analysis of the algorithm from~\cite{HuangKavitha2015} for the case of one-sided ties. In our paper and in both~\cite{Bauckholt} and \cite{HuangKavitha2015},  the analysis is based on charging schemes: first giving original charge to some objects and then redistributing this charge to nodes. The charging scheme for the problem with one-sided ties from~\cite{HuangKavitha2015} was substantially modified in~\cite{Bauckholt}. We give a new charging scheme which leads to a tight analysis for the problem with ties of size two. 

Below we compare our charging scheme to those from the papers~\cite{Bauckholt} and~\cite{HuangKavitha2015}.

\subsubsection*{Charging Scheme: Origin of Charges}
In analyzing the approximation ratio, we first fix an optimal matching.
In both~\cite{Bauckholt} and~\cite{HuangKavitha2015}, the charges are created by $5$-augmenting paths for the output matching with respect to the fixed optimal matching. In~\cite{Bauckholt}, the original charge is given to one man (the man in a ``$y$-node") on each $5$-augmenting path (see~\cite{Bauckholt}, Section 4, page 11). In~\cite{HuangKavitha2015}, the original charge is also  given to  each $5$-augmenting path (see~\cite{HuangKavitha2015}, Section 2, page 366 and Section 3, page 377).  In our charging scheme, the original charges are given to proposals. In particular, two charges are given to each of the proposals accepted at the of the algorithm (one for the man  and one for the woman participating in this proposal), except some special cases (if a proposal corresponds to a ``bad output" it is given an extra charge for the man, if it corresponds to a ``good input" it is given no charge for the woman). Because, the total number of ``bad outputs" is at most the number of ``good inputs", the total number of charges in our approach is bounded from above by two times the number of proposals accepted at the end of the algorithm. \emph{Hence, our charging scheme unlike charging schemes in~\cite{Bauckholt} and~\cite{HuangKavitha2015} does not provide a mapping from $5$-augmenting paths.} As a result, the total number of charges generated in~\cite{Bauckholt} and~\cite{HuangKavitha2015} is equal to the number of $5$-augmenting paths, while in our paper we know only that the total number of charges is bounded from above by two times  the number of the proposals accepted at the end of the algorithm (that is why the total number of charges generated in our scheme is at most four times the size of the output matching).

\subsubsection*{Charging Scheme: Distribution of Charges}
In both~\cite{Bauckholt} and~\cite{HuangKavitha2015}, the original charges are distributed ``globally".   In~\cite{HuangKavitha2015},  the original charges are first distributed to so called ``good paths", and then are redistributed from ``good paths" to nodes (see~\cite{HuangKavitha2015}, Section 2, page 366 and Section 3, page 377). In~\cite{Bauckholt}, the original charges are also distributed globally. First, each original charge received by a man is distributed to some woman locally, but then the charges are redistributed globally using ``path jumps", ``matching jumps" and  ``matching jumps with exceptions" (see~\cite{Bauckholt}, Section 4, page 11). ``Good paths", ``path jumps", ``matching jumps" and  ``matching jumps with exceptions" transfer the charges globally; i.e. they  can  transfer charges  from one part of the graph generated by the fixed optimal matching and the proposals, which are accepted at the end of the algorithm, to another part of this graph.  In our charging scheme, the original charges received by proposals are distributed locally. In particular, in our charging scheme each original charge of a proposal is distributed only  to nodes participating in this proposal. 

\emph{Our origin of charges together  with our distribution of charges form a ``local" charging scheme. In other words, the charges are generated by such objects as proposal (edges in the graph generated by the accepted proposals) and are distributed to nodes participating in these proposals (incident nodes). This makes our charging scheme easy to work with.}

\subsubsection*{Charging Scheme: Charges at the Beginning and at the End }
In order to compare the size of the output matching and the fixed optimal matching, it is natural to analyze the connected components in the union of these matchings. 
The approaches in~\cite{Bauckholt} and~\cite{HuangKavitha2015} were based on the following logic: the total number of generated charges was equal to the number of $5$-augmenting paths and then the charges were distributed to other connected components, so that  the total charge of each connected component was not too large with respect to the component's size. In other words, the approaches of~\cite{Bauckholt} and~\cite{HuangKavitha2015} were based on the idea that large connected components can ``fix the damages" caused by $5$-augmenting paths. \emph{Our approach is based on a different and  a more direct idea.}  We show that every connected component receives a total charge of at least three times the number of the edges from the fixed optimal matching in this component. Together with the fact that the total number of charges generated in our algorithm is at most four times the size of the output matching, we immediately infer the desired approximation ratio. 

\subsubsection*{Central Notions}
 
In~\cite{Bauckholt}, the notion of ``popularity" was introduced for women. Roughly speaking, a woman is called popular with respect to some man, if this woman holds two proposals at the end of the algorithm and both these proposals are not worse than the proposals that this man could offer (see~\cite{Bauckholt}, Section 3, page 6). This notion was important for the tight analysis provided in~\cite{Bauckholt}.  We use a different notion of ``popularity" in the current paper. For us, a woman is popular if this woman rejected a proposal during the algorithm. We also introduce further notions, for example ``successful". We call a woman (a man) ``successful" if this woman (this man) has two proposals (accepted) at the end of the algorithm. For working with connected components, we use the notion of ``pointing" similar to the notion of ``pointing" in~\cite{Bauckholt} (see~\cite{Bauckholt}, Section 4, page 14).

\end{document}